\def\BState{\State\hskip-\ALG@thistlm}
\newcommand{\shepherd}[1]{\textcolor{black}{#1}}
\newcommand{\revision}[1]{\textcolor{black}{#1}}
\algrenewcommand\ALG@beginalgorithmic{\small}
\begin{document}
	
	\newlength{\textfloatsepsave}
	\setlength{\textfloatsepsave}{\textfloatsep}
	
	\newtheorem{lemma}{Lemma}
	\newtheorem{theorem}{Theorem}

	% ****************** TITLE ****************************************
	
	\title{VIP Hashing - Adapting to Skew in Popularity of Data on the Fly (extended version)}

	% possible, but not really needed or used for PVLDB:
	%\subtitle{[Extended Abstract]
	%\titlenote{A full version of this paper is available as\textit{Author's Guide to Preparing ACM SIG Proceedings Using \LaTeX$2_\epsilon$\ and BibTeX} at \texttt{www.acm.org/eaddress.htm}}}
	
	% ****************** AUTHORS **************************************
	
	% You need the command \numberofauthors to handle the 'placement
	% and alignment' of the authors beneath the title.
	%
	% For aesthetic reasons, we recommend 'three authors at a time'
	% i.e. three 'name/affiliation blocks' be placed beneath the title.
	%
	% NOTE: You are NOT restricted in how many 'rows' of
	% "name/affiliations" may appear. We just ask that you restrict
	% the number of 'columns' to three.
	%
	% Because of the available 'opening page real-estate'
	% we ask you to refrain from putting more than six authors
	% (two rows with three columns) beneath the article title.
	% More than six makes the first-page appear very cluttered indeed.
	%
	% Use the \alignauthor commands to handle the names
	% and affiliations for an 'aesthetic maximum' of six authors.
	% Add names, affiliations, addresses for
	% the seventh etc. author(s) as the argument for the
	% \additionalauthors command.
	% These 'additional authors' will be output/set for you
	% without further effort on your part as the last section in
	% the body of your article BEFORE References or any Appendices.
	
	\numberofauthors{4} %  in this sample file, there are a *total*
	% of EIGHT authors. SIX appear on the 'first-page' (for formatting
	% reasons) and the remaining two appear in the \additionalauthors section.
	
\author{
	% You can go ahead and credit any number of authors here,
	% e.g. one 'row of three' or two rows (consisting of one row of three
	% and a second row of one, two or three).
	%
	% The command \alignauthor (no curly braces needed) should
	% precede each author name, affiliation/snail-mail address and
	% e-mail address. Additionally, tag each line of
	% affiliation/address with \affaddr, and tag the
	% e-mail address with \email.
	%
	% 1st. author
	\alignauthor
	Aarati Kakaraparthy\ \ \ \ \ Jignesh M. Patel\\
	\vspace{3pt}
	\affaddr{University of Wisconsin, Madison}\\
	\email{\{aaratik, jignesh\}@cs.wisc.edu}
%	\and
	\alignauthor Brian P. Kroth\ \ \ \ \ Kwanghyun Park\\
	\vspace{3pt}
	\affaddr{Microsoft Gray Systems Lab}\\
	\email{\{bpkroth, kwpark\}@microsoft.com}
}

%\author{
%	\alignauthor
%	Aarati Kakaraparthy\ \ \ \ Jignesh M. Patel\\
%	\vspace{2pt}
%	\affaddr{University of Wisconsin, Madison}\\
%	%   \affaddr{1932 Wallamaloo Lane}\\
%	%   \affaddr{Wallamaloo, New Zealand}\\
%	\email{\{aaratik, jignesh\}@cs.wisc.edu}
%	\alignauthor Kwanghyun Park\ \ \ \ Brian P. Kroth\\\vspace{2pt}
%	\affaddr{Microsoft Gray Systems Lab, Madison}\\
%	\email{\{kwpark, bpkroth\}@microsoft.com}
%}
%	\date{15 January, 2019}

	\maketitle
	
	\begin{abstract}
	All data is not equally popular. Often, some portion of data is more frequently accessed than the rest, which causes a skew in popularity of the data items. Adapting to this skew can improve performance, and this topic has been studied extensively in the past for disk-based settings. In this work, we consider an in-memory data structure, namely \textit{hash table}, and show how one can leverage the skew in popularity for higher performance.

Hashing is a low-latency operation, sensitive to the effects of caching, branch prediction, and code complexity among other factors. These factors make learning in-the-loop especially challenging as the overhead of performing any additional operations can be significant. In this paper, we propose VIP hashing, a \textit{fully online} hash table method, that uses lightweight mechanisms for \textit{learning} the skew in popularity and \textit{adapting} the hash table layout. These mechanisms are non-blocking, and their overhead is controlled by \textit{sensing} changes in the popularity distribution to \textit{dynamically switch-on/off} the learning mechanism as needed.

We tested VIP hashing against a variety of workloads generated by \textit{Wiscer}, a homegrown hashing measurement tool, and find that it improves performance in the presence of skew (\revision{22\% increase} in fetch operation throughput for a hash table with one million keys under low skew, \shepherd{77\% increase under medium skew}) while being robust to insert and delete operations, and changing popularity distribution of keys. \shepherd{We find that VIP hashing reduces the end-to-end execution time of TPC-H query 9, which is the most expensive TPC-H query, by 20\% under medium skew.}	
	\end{abstract}
	
	\section{Introduction}\label{sec:introduction}
	\begin{figure}
	\centering
	\subfloat[Default configuration: VIPs at random spots]{
		\includegraphics[scale=0.16]{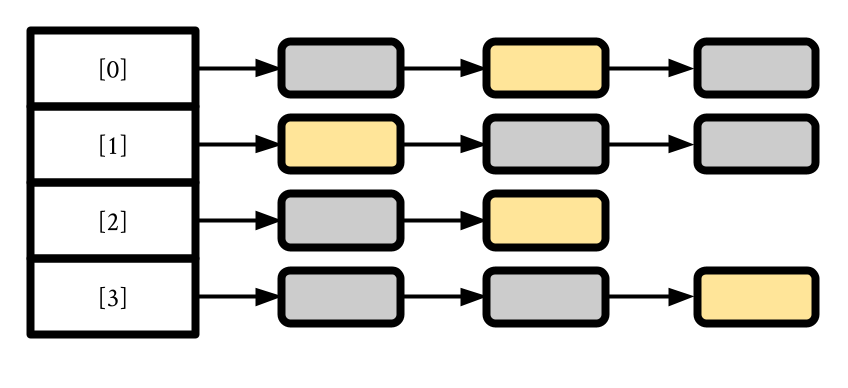}
		\label{fig:random_arrangement}
	}
	\hspace{4mm}
	\subfloat[VIP configuration: VIPs at the front]{
		\includegraphics[scale=0.16]{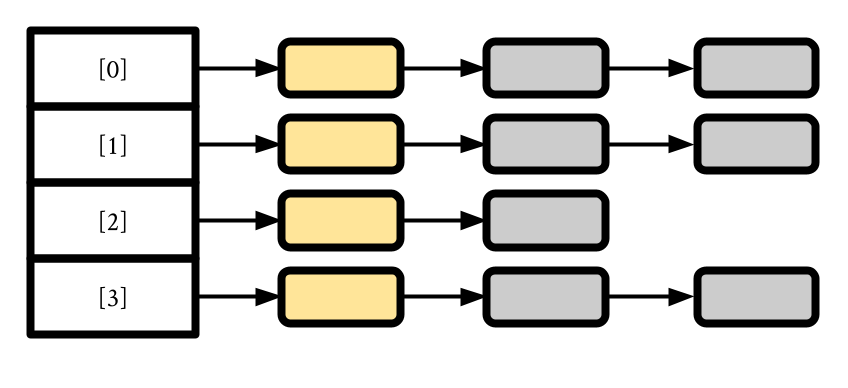}
		\label{fig:vip_arrangement}
	}
	\caption{\textbf{Hash Table configurations with VIP keys (in yellow) at (a) random spots, vs. (b) at the front. The throughput of the hash table can be improved by giving VIPs more favorable spots in the front.}}
	\label{fig:vip_idea}
	\vspace{-2mm}
\end{figure}

Hash tables are widely used data structures with a simple point lookup interface \textendash\ mapping a key to a value. In database systems, they are used for in-memory indexing and also in query processing operations such as hash joins and aggregation. The lightweight computation involved and the constant time lookup guarantees are two reasons that enable hash tables to achieve high throughput when processing point queries.

However, not all keys contribute equally to the performance, and requests are often skewed towards a smaller set of ``hot" keys. In multiple studies involving production workloads, fetch requests have been observed to follow \revision{the power law}~\cite{powerlaw,fbstudy,zipf2} where the popularity of keys exponentially decays with the rank. The Very Important key-value Pairs (VIPs) are the keys with lower rank, as they constitute a larger portion of requests and have a greater impact on the throughput. It is possible to further improve the throughput obtained from the hash table by leveraging this skew in popularity, as we show in our work.

Fig.~\ref{fig:vip_idea} shows the core motivation behind VIP Hashing \textendash\ giving more favorable spots to more popular keys. In the VIP configuration (Fig.~\ref{fig:vip_arrangement}), the keys are ordered in descending order of popularity and the VIPs are in the front, analogous to seating VIPs in the front row for an event. By placing the popular keys at the start, they can be accessed faster due to multiple reasons such as fewer memory accesses and lesser computation (discussed in \S\ref{sec:roofline}), which improves the overall throughput obtained from the hash table.

While attaining the VIP configuration is straightforward if the popularity of keys is known in advance (keys can be inserted in the right position in the chain according to their popularity), one might not have this information up front. Also, the popularity of the keys can change over time resulting in a different set of VIPs. Thus, more generally, one needs to learn the popularity of keys and adapt on the fly.

It is important to note that learning requires some amount of computation and storage. In case of disk-based data structures, this overhead can be relatively small compared to the high latency of accessing storage devices. However, this is not true for hash tables which are typically resident in memory and involve lightweight computation. Even adding a small counter per entry in the hash table can degrade performance considerably, as we show in \S\ref{sec:learning_is_costly}. Thus, the learning mechanisms need to be designed keeping the overhead in check compared to the gains. 

Our contributions in this paper are as follows \textendash\vspace{-1mm}
\begin{enumerate}[leftmargin=*]
	\item \textit{\textbf{Wiscer}} (\S\ref{sec:wiscer}) \textendash\ We developed a configurable tool for measuring the performance of hash tables. Wiscer can be used to generate workloads with varying levels of skew in popularity, with different ratios of fetch, insert and delete operations, and shifting hot set of keys over time. To our knowledge, no existing benchmarking tool captures all of this behavior in one place.\vspace{-1mm}
	
	\item \textbf{\textit{Roofline Analysis of the VIP configuration}} (\S\ref{sec:roofline}) \textendash\ We study the benefit of the VIP configuration (Fig. \ref{fig:vip_arrangement}) given prior knowledge of popularity. Since there is no overhead of learning, this analysis shows the maximum gain one can obtain from adapting to the skew (for a hash table with 10M keys at load factor 0.6, we observe a 57\% increase in throughput from the VIP configuration in the best case).\vspace{-1mm}
	
	\item \textbf{\textit{Learning on a budget}} (\S\ref{sec:learning}) \textendash\ We developed lightweight mechanisms for \textit{learning} the popularity distribution on the fly, \textit{adapting} to the skew, \textit{sensing} changes in the popularity distribution, and \textit{dynamically switching on/off} learning to control the overhead. Put together, they give us the VIP Hashing method for learning the skew in popularity on the fly.\vspace{-1mm}
	
	\item \textbf{\textit{\revision{Application to hash joins}}} \revision{(\S\ref{sec:pk_fk_join})} \textendash\ \revision{We study the application of VIP hashing to PK-FK hash joins, and we obtain a 13-23\% reduction in canonical join query execution time (for a cardinality ratio of 1:16 in the relations and a hash table with load factor of 1.4). \shepherd{We implemented VIP hashing in DuckDB~\cite{duckdb} to speed up PK-FK hash joins in single-threaded mode, and we obtain a net reduction of 20\% in end-to-end execution time of TPC-H query 9~\cite{tpch} under low and medium skew.}} \vspace{-1mm}
	
	\item \textbf{\textit{\revision{Application to point queries}}} \revision{(\S\ref{sec:point_lookups})} \textendash\ \revision{Another common use of hash tables is processing point queries. We test VIP hashing at a load factor of 0.95 under a variety of workloads involving insert and delete operations, shifting popularity distribution, different rates of shift, etc. \shepherd{A gain in throughput of 22\% (77\%) is obtained under low (medium) skew}, while our choice of parameters ensures that the loss due to the overhead of learning is capped in the worst case.}
	
%	\item \textbf{\textit{\revision{Application to point queries}}} (\S\ref{sec:evaluation}) \textendash\ 
%	We test VIP hashing under a variety of workloads generated using Wiscer involving different levels of skew, insert and delete operations, shifting popularity distribution, different rates of shift, etc. We discuss some representative workloads in \S\ref{sec:evaluation} that highlight the robustness of VIP hashing under different conditions.
	
%	\item (Maybe) \textbf{\textit{Applications to query processing}} (\S\ref{sec:applications}) \textendash\ We apply the VIP hashing scheme to hash joins and aggregation on a stream.
\end{enumerate}

%In \S\ref{sec:evaluation}, we run experiments on a hash table with one million keys at a load factor of 0.95. The gain obtained from VIP hashing varies depending upon on a host of factors such as the size of the hash table, number of keys, level of skew in the requests, the request mix in the workload, the parameters of the underlying hardware, etc. In our experiments, we obtain a gain from VIP hashing ranging from 7-19\% depending on the workload, while our choice of parameters ensures that the loss in the worst case due to the overhead of learning is capped (2.5\% loss observed under uniform popularity distribution).

\revision{Overall, our experiments in \S\ref{sec:evaluation} show that the VIP hashing is a fully online non-blocking learning method that captures the skew in popularity on the fly, while being robust to inserts, deletes, and shifting popularity distribution. We discuss related work in \S\ref{sec:related_work} and conclude in \S\ref{sec:conclusion}.}

%Overall, our experiments in \S\ref{sec:evaluation} indicate that the VIP hashing method is robust to different workload conditions, and improves the performance when there is skew in the popularity of keys. 
	
	\section{Background}\label{sec:background}
	\subsection{Hash Tables}\label{sec:background_ht}

A hash table~\cite{hashtable} is an associative data structure that maps keys to values. In our work, we focus on \textit{chained hashing} (hereafter referred to as hash table). A hash table (Fig.~\ref{fig:vip_idea}) uses a \textit{hash function} to map each key to a unique index or \textit{bucket}. Since more than one key can be mapped to the same bucket, the data structure resolves these \textit{collisions} by maintaining a chain (linked list) of entries belonging to the bucket. The flexibility provided by this data structure for performing insert and delete operations, along with variable length keys and values make it a popular choice in many data systems~\cite{redis,mysql,memcached,sqlite}.

\subsubsection{On Properly Configuring the Hash Table}\label{sec:configuring_ht}
 
In this paper, we focus on hashing of 8-byte integer keys and values, which is a well studied problem in past research~\cite{richter,spyros}. It is important to configure the hash table correctly to draw reliable conclusions, and there are two important factors to consider. The first is the choice of the hash function. In our work, we use \textit{MurmurHash}~\cite{murmurhash}, which is a strong hash function that provides good collision resistance in practice. The second critical aspect is the \textit{load factor}, which is the ratio of keys to the number of buckets in the hash table. Higher load factors correspond to fewer buckets, which lead to longer chains on an average, whereas lower load factors require more buckets and consume more memory. \revision{Informed by parameter choices in popular open-source systems~\cite{redis,memcached,postgres}, 	we maintain a load factor between \revision{$0.5$ and $1.5$}} to ensure that collisions are at an acceptable level while utilizing memory efficiently. Wherever applicable, we \textit{rehash} the hash table to maintain this range of load factor. The number of buckets in the hash table are set to be a power of two, which is a common choice~\cite{postgres,memcached,redisBlog} that speeds up the computation of the hash function. If the load factor exceeds 1.5 (falls under 0.5), we double (half) the number of buckets in the hash table.

\begin{table*}[]
	\caption{\textbf{Configuration options supported by \textit{Wiscer}}}
	\label{tab:wiscer_options}
	\centering
	\begin{tabular}{|c|l|}
		\hline
		\textbf{Option}                                                       & \multicolumn{1}{c|}{\textbf{Description}}                                                         \\ \hline
		\textit{zipf}                                                                       & The zipfian factor of the popularity distribution. \textit{zipf} = 0 corresponds to uniform popularity.    \\ \hline
		\textit{initialSize}                                                                & Initial number of keys in the hash table before running any operations.                  \\ \hline
		\textit{operationCount}                                                             & Total number of operations (fetch, inserts, etc.) to run on the hash table.                       \\ \hline
		\textit{\begin{tabular}[c]{@{}c@{}}(fetch/insert/delete)\\ Proportion\end{tabular}} & Proportion of operations that are fetch/insert/delete.                                            \\ \hline
		\textit{distShiftFreq}                                                              & A shift in popularity distribution occurs after every \textit{distShiftFreq} operations.                   \\ \hline
		\textit{distShiftPrct}                                                              & The popularity distribution shifts by \textit{distShiftPrct}\% every \textit{distShiftFreq} operations.           \\ \hline
		\textit{\revision{storageEngine}}                                                              & \begin{tabular}[c]{@{}l@{}}\revision{Which storage engine to benchmark. Options are} \\ \revision{\textit{ChainedHashing} (default), \textit{VIPHashing}, and \textit{none} (store workload to disk).}\end{tabular} \\ \hline
		
		\textit{\revision{keyPattern}}                                                                & \revision{The pattern of keys to generate \textendash\ \textit{random} (default) or \textit{sequential} ($1$ to $n$).}                 \\ \hline
		
		\textit{keyOrder}                                                              & \begin{tabular}[c]{@{}l@{}}The popularity rank of keys relative to the insertion order. Options are\\ \textit{random} (default) and \textit{sorted} (where keys are inserted in increasing order of popularity; a.k.a. latest).\end{tabular} \\ \hline
		
		 \textit{randomSeed}
		 & \begin{tabular}[c]{@{}l@{}}The seed value (unsigned integer) to initialize the random number generator (default = 0).\\ The random number generator is used to populate the hash table and generate the workload.\\
		 Different seed values result in different instances of keys and the workload.
		\end{tabular} \\ \hline
	\end{tabular}
\end{table*}
\begin{figure}
	\centering
	\includegraphics[scale=0.42]{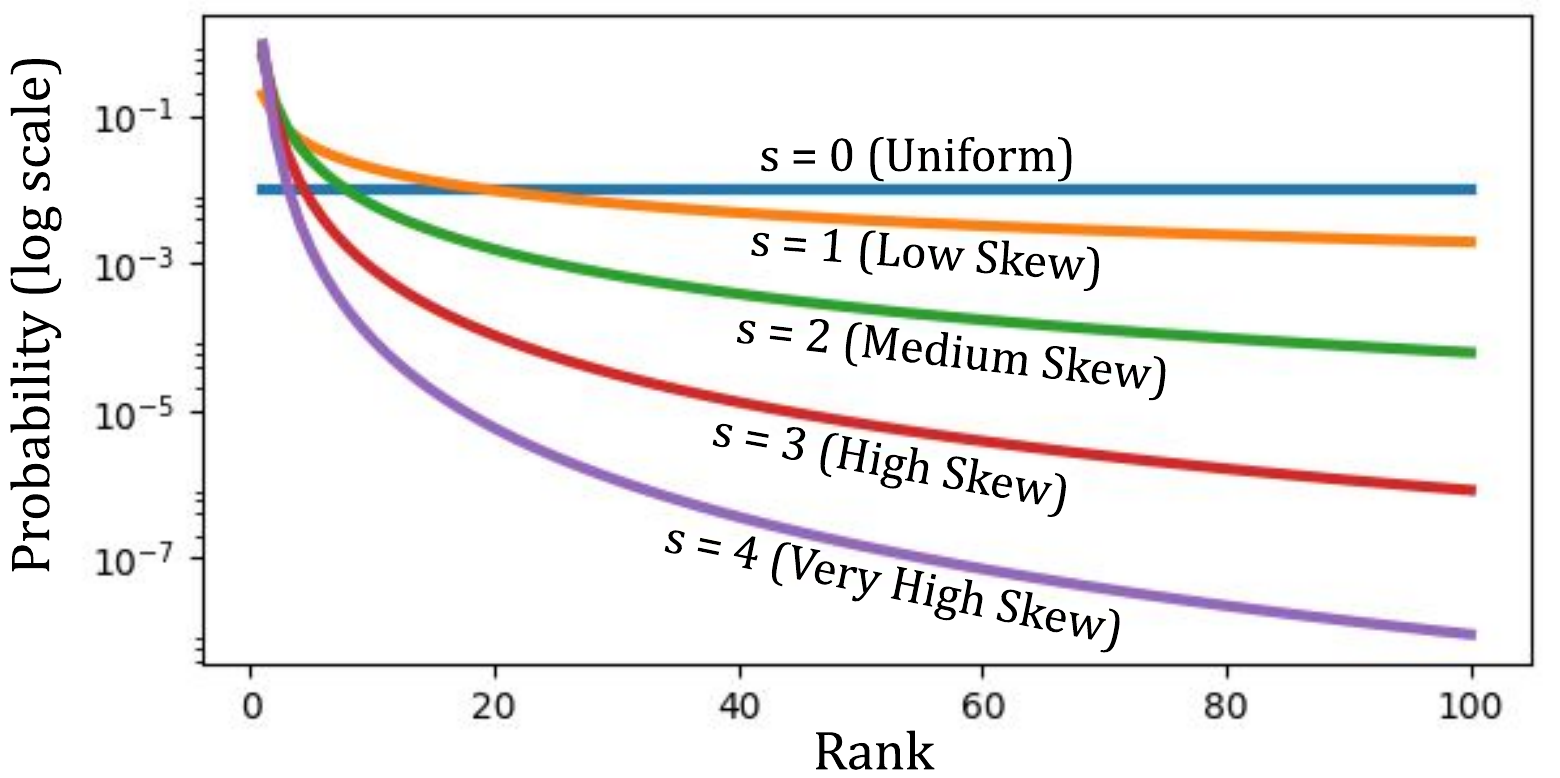}
	\caption{\textbf{Popularity distribution of keys (number of keys $N=100$) for different Zipfian skew factors $s$.}}
	\label{fig:zipf_factors}
\end{figure}

%In our experiments, we use a strong hash function, namely \textit{Murmur hash}~\cite{}. Another important aspect to consider is the \textit{load factor}, i.e, the ratio of keys in the hash table to the number of buckets.
%It is important to configure the hash table such that collisions are at an acceptable level, otherwise it is possible to draw incorrect conclusions. Very high load factors correspond to fewer buckets, which can lead to longer chains and more collisions. Typically, a load factor between 0.5 and 1 combined with a strong hash function mitigates collisions.

\subsection{Some Probability Bounds and Theorems}\label{sec:background_prob}

%In this section, we discuss some theorems and bounds related to probabilistic random variables. We use these in our work to generate workloads with skewed popularity distributions, and to develop different mechanisms used by VIP hashing.

\revision{Below we discuss some tools related to probabilistic random variables that we use in our work.}

\begin{itemize}[leftmargin=*]
	\item \revision{\textbf{Zipfian distribution}: We use Zipfian distribution~\cite{zipf} to model varying levels of skew in fetch operations issued to keys in a hash table. Zipfian distribution has been adopted by multiple studies in the past~\cite{anna,spyros,fbstudy} to statistically model skew in popularity, as it captures the power law~\cite{powerlaw} characteristics of workloads that are often observed in practice~\cite{fbstudy,zipf2}.}
	\item \revision{\textbf{Estimating mean and variance}: Let $X$ be a random variable with mean $\mu$ and variance $\sigma^2$. Let $X_1,\ X_2,...\ X_n$ be $n$ independent and identically distributed (i.i.d.) measurements of $X$. The estimated mean $\hat{\mu}$ and estimated variance $\hat{\sigma}^2$ can be evaluated as
	\begin{displaymath}
	\hat{\mu} = \frac{\sum\limits_{i=1}^{n}X_i}{n},\hspace{4mm}
	\hat{\sigma}^2 = \Bigg(\frac{\sum\limits_{i=1}^{n}X_i^2}{n-1} - \frac{\Big(\sum\limits_{i=1}^{n}X_i\Big)^2}{n(n-1)}\Bigg)
	\end{displaymath}}
	\item \revision{\textbf{Gaussian tail bound confidence interval}: For a random variable $X$ (refer above), the central limit theorem (CLT)~\cite{clt} states that the error in estimated mean $(\hat{\mu}-\mu)$ is approximately Gaussian distributed $\mathcal{N}(0, \sfrac{\sigma^2}{n})$. By applying the Gaussian pdf, a confidence interval can be obtained for the error $(\hat{\mu}-\mu)$ as follows 
	\begin{displaymath}
	P(|\hat{\mu} - \mu| \leq t) \geq \bigg(1-exp\bigg(\frac{-nt^2}{2\sigma^2}\bigg)\bigg) = \frac{L}{100}
	\end{displaymath}
	Thus, we can at least be $L\%$ confident that the error $|\hat{\mu}-\mu|$ is less than $t$. Note that the confidence increases exponentially with $n$ (number of samples $X_i$ drawn). It is important to note that $(\hat{\mu}-\mu)$ is only \textit{approximately} Gaussian, so the confidence interval obtained from applying Gaussian tail bound is a heuristic.}
\end{itemize}

	\section{Skewed Workload Generation with \secit{WISCER}}\label{sec:wiscer}
	%We talk about supporting zipfian distribution and simulating shifting hotsets.

%In this section, we first give an overview of Wiscer (\S\ref{sec:wiser_overview}) followed by describing the workload suite (\S\ref{sec:workload_suite}) that we use to evaluate VIP hashing.
\subsection{Overview}\label{sec:wiser_overview}
Wiscer~\cite{wiscer_bib} is a workload generation tool that we propose in this paper. Wiscer has multiple configuration options (Table \ref{tab:wiscer_options}) that can be used to generate workloads with different levels of skew, varying proportions of fetch, insert, delete operations, different rates of popularity shift, etc. Below are some key features of Wiscer:

\begin{itemize}[leftmargin=*]
	\item \textbf{Level of skew}: Increasing levels of skew in the popularity distribution can be simulated by increasing the \textit{zipf} factor. For instance, $zipf=0$ and $zipf=4$ correspond to uniform distribution and very high skew respectively (see Fig.~\ref{fig:zipf_factors}).
	\item \textbf{Simulating popularity distribution shift}: The two related configuration options are \textit{distShiftFreq} and \textit{distShiftPrct}. After every \textit{distShiftFreq} fetch operations, the topmost popular keys that constitute \textit{distShiftPrct} of the requests are randomly replaced by less popular keys. This simulates a behavior where keys in the hot set become less popular after some time, which has also been observed in some real-world workloads~\cite{fbstudy}.
	\item \revision{\textbf{Benchmarking hash table implementations}: Wiscer can optionally be used to compare different hash table implementations (option \textit{StorageEngine}) to directly process the generated workloads without intermediate storage.}
	\item \revision{\textbf{Fine-grained performance metrics using hardware counters}: When using Wiscer for benchmarking, operations are issued to the configured  hash table in batches of one million requests at a time, and fine-grained metrics are collected per batch. Wiscer uses hardware counters provided by the Intel's Performance Monitoring Unit (PMU)~\cite{pmu} to get low-level performance metrics such as cache misses, number of cycles, retired instructions, etc.}
	
%Thus, we can obtain both the average and the fine-grained statistics of throughput using Wiscer.
%	\item \textbf{Minimizing interference} \textendash\ Wiscer minimizes interference caused to benchmarking in two ways. First, the workload is generated in bulk at the start before issuing any requests to the storage engine. Secondly, the prefetch length is controlled to minimize the amount of cache occupied by the workload.
%	\item \textbf{Rehashing to maintain reasonable load factor} \textendash\ 
%	\item \textbf{Hardware counters} \textendash\ Wiscer uses hardware counters provided by the Intel's Performance Monitoring Unit~\cite{pmu} (PMU) to get low-level performance metrics such as cache misses, number of cycles, retired instructions, etc. for every batch of requests.
\end{itemize}

%Thus, Wiscer allows us to compare VIP hashing to the vanilla implementation using a myriad of workloads while measuring multiple metrics to reason about the performance.
% I have to mention somewhere about rehashing to maintain good load factor. Otherwise it is important to draw incorrect conclusions if the load factor increases arbitrarily.

\subsection{Experimental Configuration}\label{sec:experimental_configuration}
All experiments in this paper are run on a Cloudlab~\cite{cloudlab} machine with two 10-core Intel Xeon Silver 4114 CPUs with a peak frequency of 3.0GHz. The benchmarking process is pinned to a single core to avoid any overhead of context switching. The CPU scaling governor of the core has been set to \textit{performance}, thus fixing the frequency to 3.0GHz at all times. The CPU has an L3 cache of 13.75MB, and the server machine has 192GB of RAM. This CPU belongs to the Skylake Intel architecture family~\cite{cpu}, and the PMU's hardware counters are programmed accordingly. The server machine is used exclusively for running Wiscer to mitigate interference from any concurrent processes.

	\section{Roofline Study}\label{sec:roofline}
	\begin{figure}
	\centering
	\subfloat[Default configuration. A total displacement of 12 (=2$\times$(2+1+3)) is required to process the fetch requests. The less popular keys in the path of popular keys need to accessed as well.]{
		\includegraphics[scale=0.26]{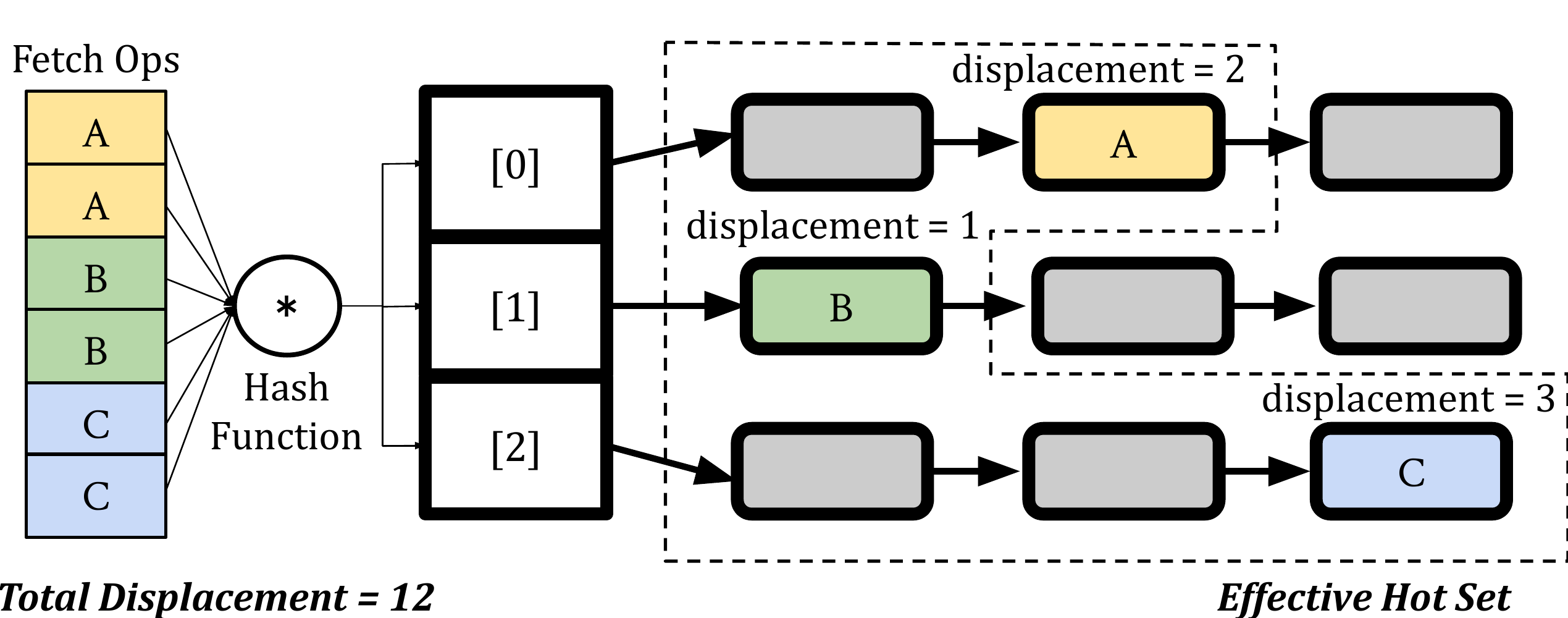}
		\label{fig:disp_random_arrangement}
	}
	\vspace{3mm}
	\subfloat[VIP configuration. A total displacement of 6 (=2$\times$(1+1+1)) is required to process the fetch requests. Only the popular keys are accessed.]{
		\includegraphics[scale=0.26]{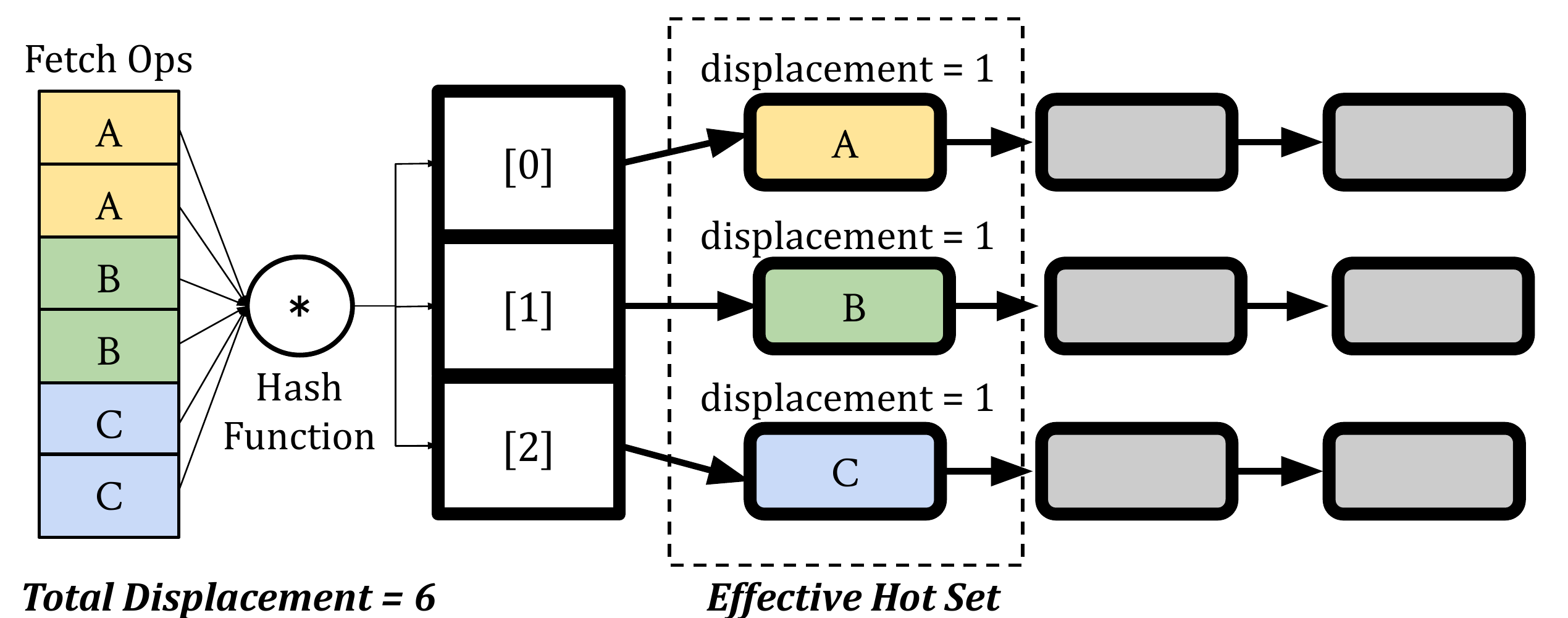}
		\label{fig:disp_vip_arrangement}
	}
	%	\vspace{-0.4em}
	\caption{\textbf{Processing fetch requests in the Default vs the VIP configuration. Unpopular keys have been grayed out. The total \textit{displacement} (number of keys accessed) is higher in the Default configuration requiring more pointer dereferences. Also, the effective hot set is larger, increasing the likelihood of cache misses relative to the VIP configuration.}}
	\label{fig:disp_analysis}
\end{figure}

\begin{figure*}
	\centering
	\subfloat[\revision{Fetch operation throughput}]{
		\includegraphics[scale=0.34]{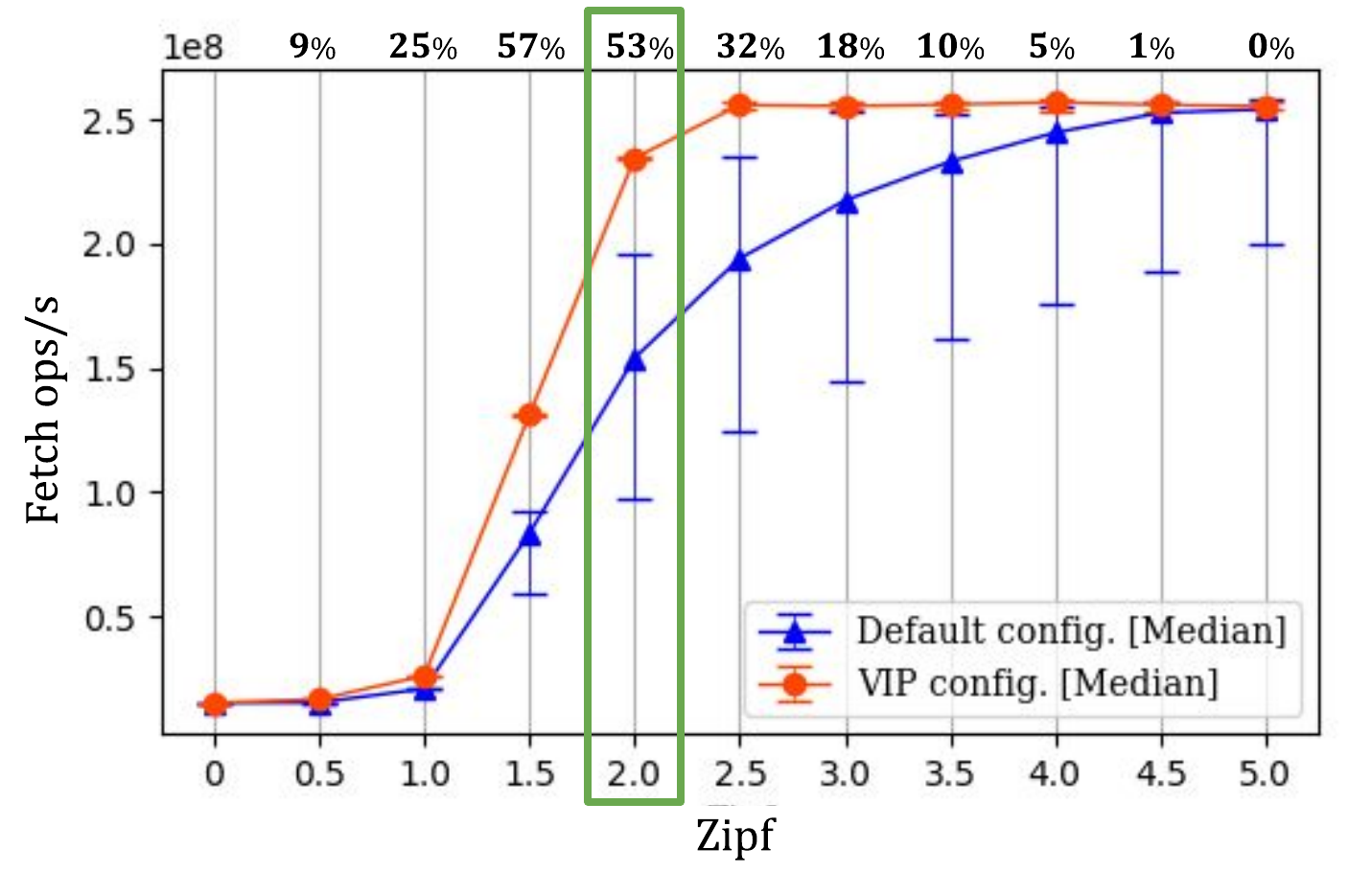}
		\label{fig:roofline_throughput}
	}\hspace{5mm}
	\subfloat[\revision{Displacement}]{
		\includegraphics[scale=0.34]{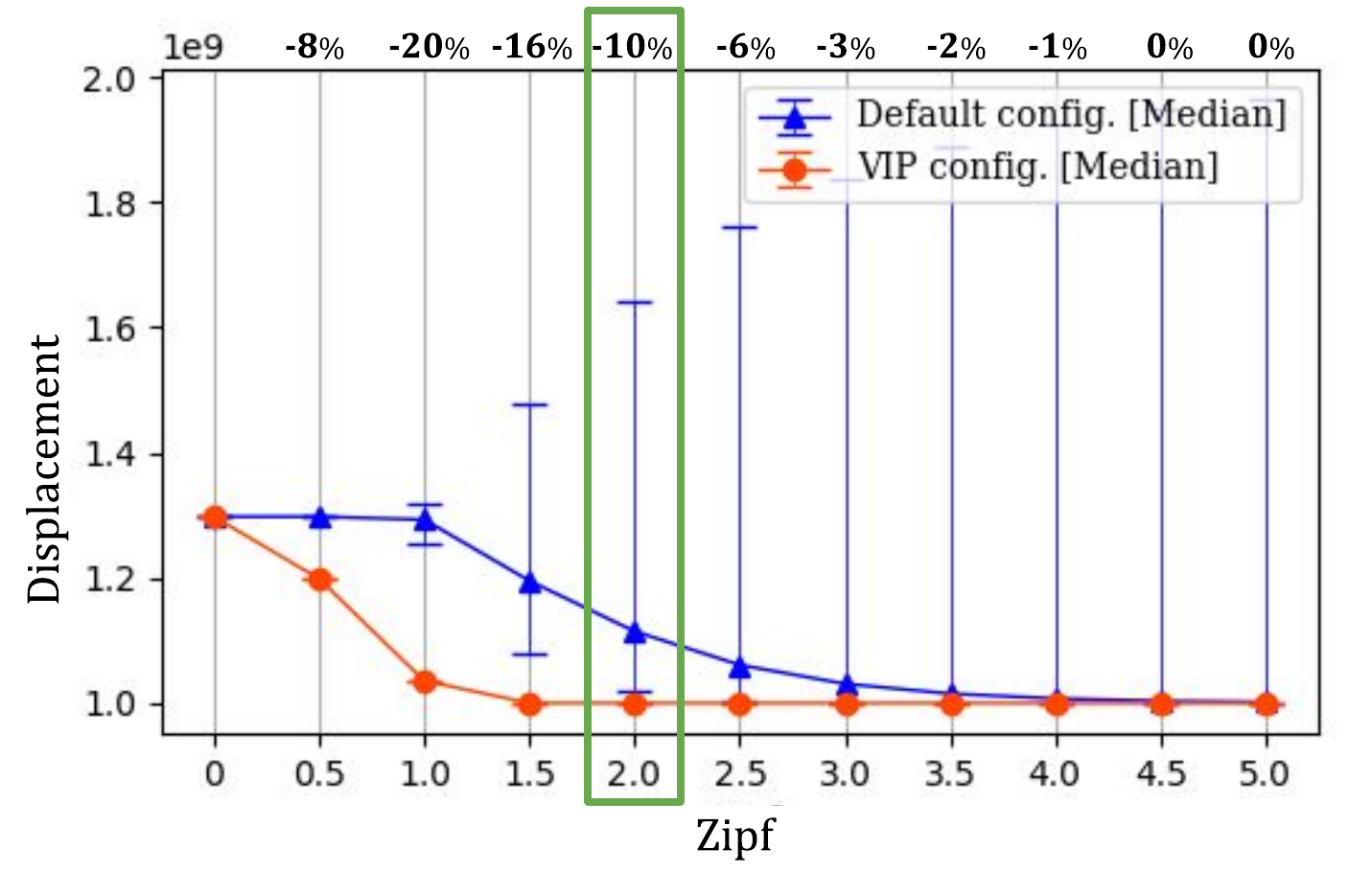}
		\label{fig:roofline_disp}
	}\hspace{5mm}
	\subfloat[\revision{Retired instructions}]{
		\includegraphics[scale=0.34]{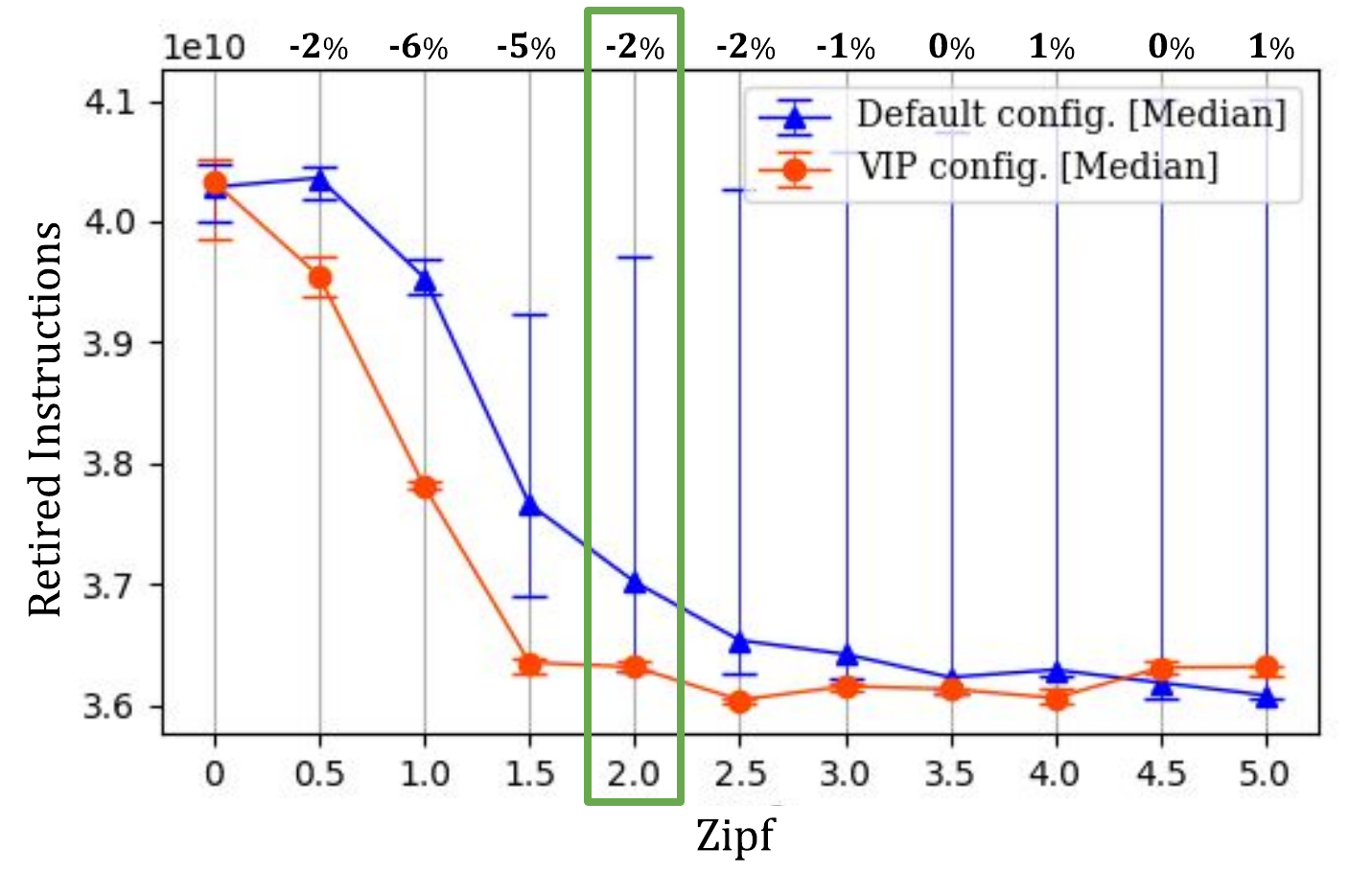}
		\label{fig:roofline_retd_insts}
	}
	%	\vspace{1mm}
	
	\subfloat[\revision{L1 cache misses}]{
		\includegraphics[scale=0.34]{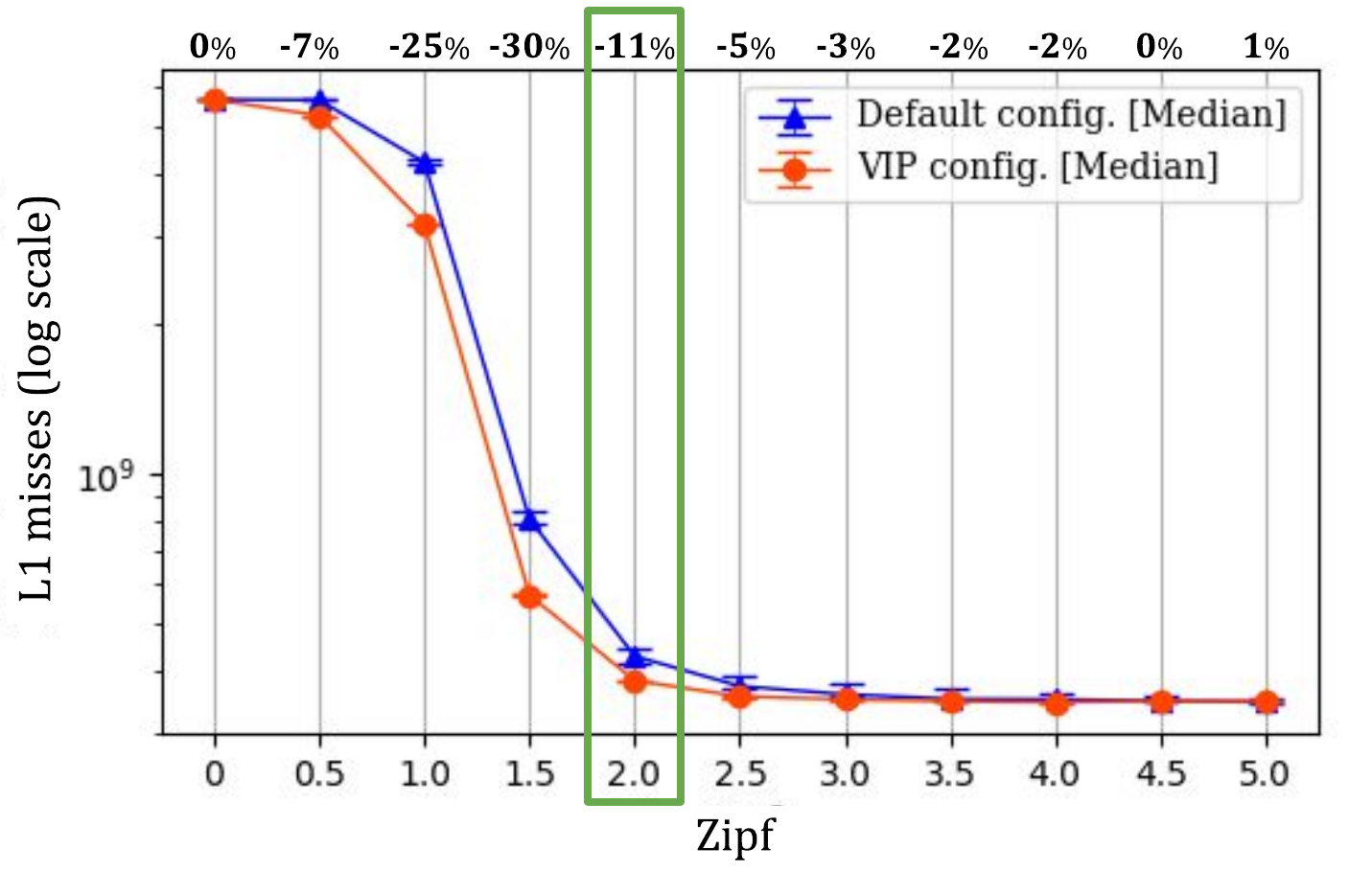}
		\label{fig:roofline_l1_misses}
	}\hspace{5mm}
	\subfloat[\revision{L2 cache misses}]{
		\includegraphics[scale=0.34]{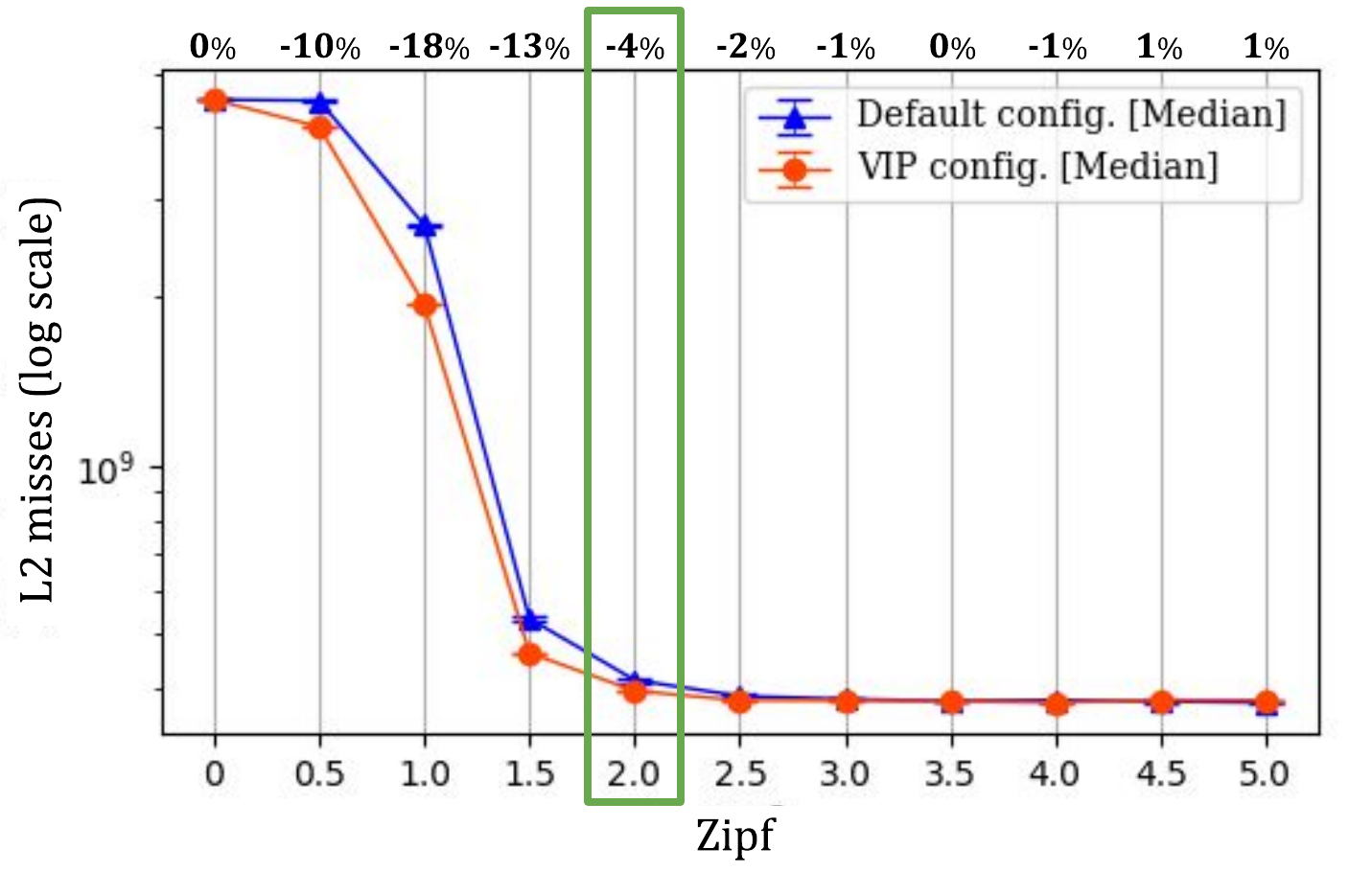}
		\label{fig:roofline_l2_misses}
	}\hspace{5mm}
	\subfloat[\revision{L3 cache misses}]{
		\includegraphics[scale=0.34]{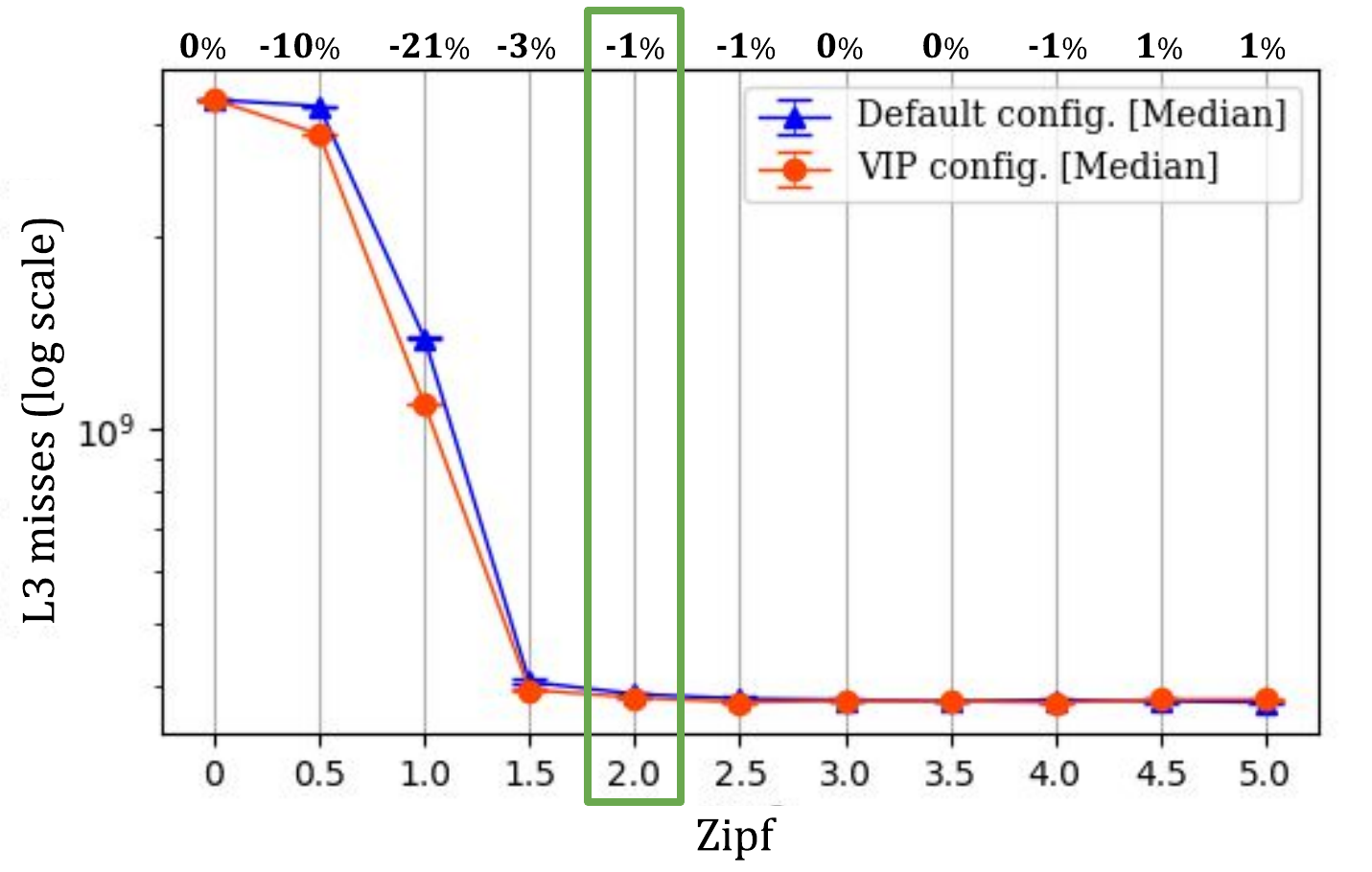}
		\label{fig:roofline_l3_misses}
	}
	%	\vspace{5mm}
	
%	\subfloat[Retired instructions]{
%		\includegraphics[scale=0.6]{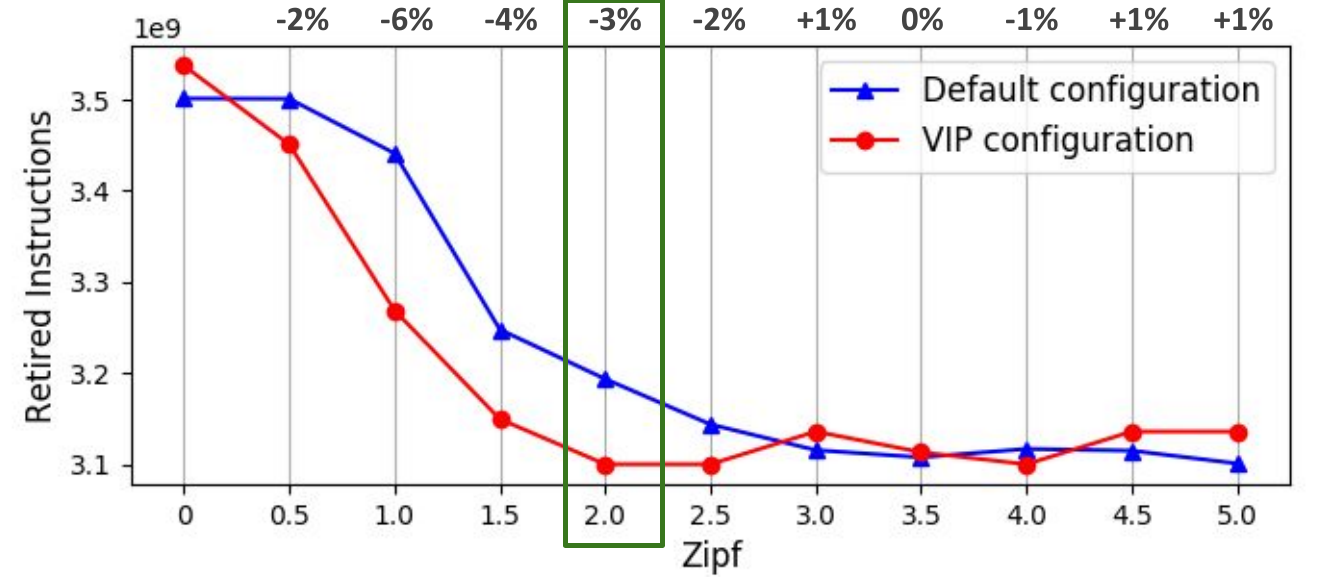}
%		\label{fig:roofline_retd_insts}
%	}
%	\vspace{-0.4em}
	\caption{\textbf{Relative performance of the VIP vs the Default configurations as the skew in popularity increases. One billion fetch requests are issued to a hash table with 10M keys (load factor $0.6$) for varying levels of skew from $zipf=0$ to $zipf=5$. Each reported data point is the \revision{median} over 10 runs with different random seeds. Percentage difference indicated at the top of each plot is the difference between median metrics of the VIP vs the Default configuration. \revision{The gain in fetch operation throughput varies with skew}, and we obtain 53\% increase in throughput for medium skew ($zipf = 2.0$). Lesser number of cache misses and instructions executed contribute to the gain obtained from the VIP configuration. Results are discussed in} \S\ref{sec:roofline_results}\textbf{.}}
	\label{fig:roofline_results}
%	\vspace{-1.5mm}
\end{figure*}

%In this section, we show the roofline benefits of exploiting popularity. We show the performance gap between hashmap configurations where entries are in random order vs sorted order of popularity.
In this section, we compare the performance of the Default and VIP configurations when the popularity of keys is static and known in advance. Since there is no overhead of learning involved in this case, this roofline study shows the maximum gain one can get from the VIP configuration for \revision{different levels of skew (\S\ref{sec:roofline_increasing_skew}) in popularity at different load factors (\S\ref{sec:roofline_increasing_lf}) of the hash table}.

\subsection{\revision{Default vs VIP Configuration}}
\subsubsection{Motivation}

Fig. \ref{fig:disp_analysis} shows an example of processing fetch requests in the Default and the VIP configurations. A key parameter to note is the \textit{displacement} encountered, which is the total number of keys that were accessed to process the fetch requests. Accessing a key requires dereferencing a pointer
and some additional computation. The displacement encountered in the Default configuration is higher as the less popular keys in the path to VIPs need to be accessed when processing the fetch requests and effectively become part of the hot set. A larger hot set increases the likelihood of cache misses, and we observe this trend in our experiments described next.

\subsubsection{\revision{Generating the configurations using Wiscer}}

In the VIP configuration, keys in the hash table are arranged in descending order of popularity in the bucket chains (see Fig. \ref{fig:disp_vip_arrangement}). We attain this configuration by running Wiscer with the default storage engine (\textit{ChainedHashing}) and inserting keys in increasing order of popularity (\textit{keyorder=sorted}, default is \textit{random}). Insert operations on the hash table are performed at the front of the bucket chain (\S\ref{sec:background_ht}). Thus, when inserting keys in the sorted order, entries are automatically placed in decreasing order of popularity as more popular keys are inserted later and are ahead in the bucket chain. The Default configuration is generated using the default parameters of Wiscer.

\subsection{\revision{Impact of Increasing Skew}}\label{sec:roofline_increasing_skew}
\subsubsection{Workload}\label{sec:roofline_increasing_skew_workload}
We compare the throughput of fetch operations in the Default and VIP configurations. We use Wiscer (Table~\ref{tab:wiscer_options}) to generate fetch requests with increasing levels of skew (\textit{zipf} = 0 to 5 in steps of 0.5) which are issued to a hash table with 10 million keys at a load factor of $0.6$ ($=10^7/2^{24}$). For each level of skew and hash table configuration, Wiscer is run with 10 distinct random seed values to populate the hash table and generate the workload. Each random seed results in a different arrangement of keys in the hash table. The popularity distribution is static, i.e., the rank of the keys remains the same throughout a run. One billion fetch requests are issued to the hash table for each random seed, and the data points reported in Fig. \ref{fig:roofline_results} are the median statistics over the 10 runs. We have run experiments on smaller (1M entries) and larger (100M entries) hash tables and found the trends to be similar.
% Fig. \ref{fig:roofline_results} shows the results averaged over the 10 runs for each configuration and zipf value.

\subsubsection{Results}\label{sec:roofline_results}
The results of this experiment are shown in Fig. \ref{fig:roofline_results}. \revision{The gain in throughput ranges from 9\%-57\% depending upon the level of skew in popularity}. Below we discuss our takeaways from the performance metrics measured using Wiscer:

\begin{itemize}[leftmargin=*,noitemsep]
\vspace{-0.5mm}
\item \textit{Throughput}: The gap in performance between the VIP and the Default configuration increases up to $zipf=2$ (medium skew), and gradually diminishes as the skew becomes very high ($zipf=4.5$ or $5$). This behavior is correlated with the hot sets becoming smaller as the skew increases and progressively becoming (L1/2/3) cache resident at different rates for the two configurations.\vspace{2mm}

\item \textit{Displacement}: As expected, the displacement encountered in the VIP configuration is lower than the Default (see Fig.~\ref{fig:disp_analysis}). For $zipf=1.5$ and up, the total displacement becomes close to 1B (for 1B fetch requests), indicating that popular keys are at the front of their chains (displacement $=1$) in the VIP configuration. \revision{For the Default configuration, the median displacement approaches 1B at higher levels of skew ($zipf\geq4$), but the variance is high as some random seeds can result in the popular keys placed further in the chains (however the likelihood of this happening is low as the load factor is not very high).}\vspace{2mm}

\item \textit{Instructions Executed}: The instructions executed are lower in the VIP configuration (up to 6\% lower in the best case). The relative trend observed is similar to that of displacement, as the number of instructions executed is correlated with the number of keys accessed.\vspace{2mm}

\item\textit{Cache misses}: The VIP configuration becomes L3 and L1 cache resident (at $zipf=2$ and $2.5$ respectively) more quickly compared to the Default configuration (at $zipf=3.0$ and $4.5$ respectively), which is expected as the hot set of the former is smaller than the latter (Fig.~\ref{fig:disp_analysis}). At very high skew ($zipf=4.5$ and $5$), both the configurations are L1 resident and correspondingly, we do not observe much difference in the throughput. This indicates that caching has a big impact on the performance of hash tables.
\end{itemize}

Overall, we note that since the hot set of the VIP configuration is smaller than the Default, we encounter lower cache misses at all levels of cache. This contributes to the gain in performance we obtain from the VIP configuration.

Another important observation we make is that displacement indicates the goodness of the hash table configuration. The VIP configuration has lower displacement than the Default in all cases (in fact, the VIP configuration has the lowest possible displacement for a given data set, hash table size, hash function, and request skew; we discuss this in \S\ref{sec:vip_hashing_sensing_toggle}). We use this metric in building the mechanisms for sensing and dynamically switching-on/off learning (\S\ref{sec:vip_hashing_sensing_toggle}).

\subsection{\revision{Impact of Increasing Load Factor}}\label{sec:roofline_increasing_lf}
\subsubsection{\revision{Workload}}\label{sec:roofline_increasing_lf_workload}
\revision{In this experiment, we increase the load factor while holding the size of the hash table constant.
Similar to \S\ref{sec:roofline_increasing_skew_workload}, we run one billion fetch operations on a hash table with $2^{24}$ buckets while varying the load factor from $0.5$ to $1.5$ in steps of $0.25$ (this is achieved by increasing \textit{initialSize} from $2^{23}$ to $3\cdot 2^{23}$). Each configuration is run with 10 distinct random seeds and we compare the median statistics over the 10 runs.}

\subsubsection{\revision{Results}}
\revision{Fig.~\ref{fig:roofline_lf} shows the median gain obtained as we increase the load factor from 0.5 to 1.5. We obtain 1.6x, 2.6x, and 1.8x higher throughput from the VIP configuration at low ($zipf = 1$), medium ($zipf = 2$), and high skew ($zipf=3$) respectively at load factor 1.5. In all cases, the gain from the VIP configuration increases as the load factor increases, which is expected as the likelihood of collisions is higher when more keys are present in the hash table. We find that the performance metrics of the VIP configuration are mostly stable (refer to Table~\ref{tab:roofline_lf_metrics}) indicating a stable hot set size, while the performance of the Default configuration becomes steadily worse as the effective hot set grows larger with the load factor.}

\begin{figure}
	\centering
	\includegraphics[scale=0.5]{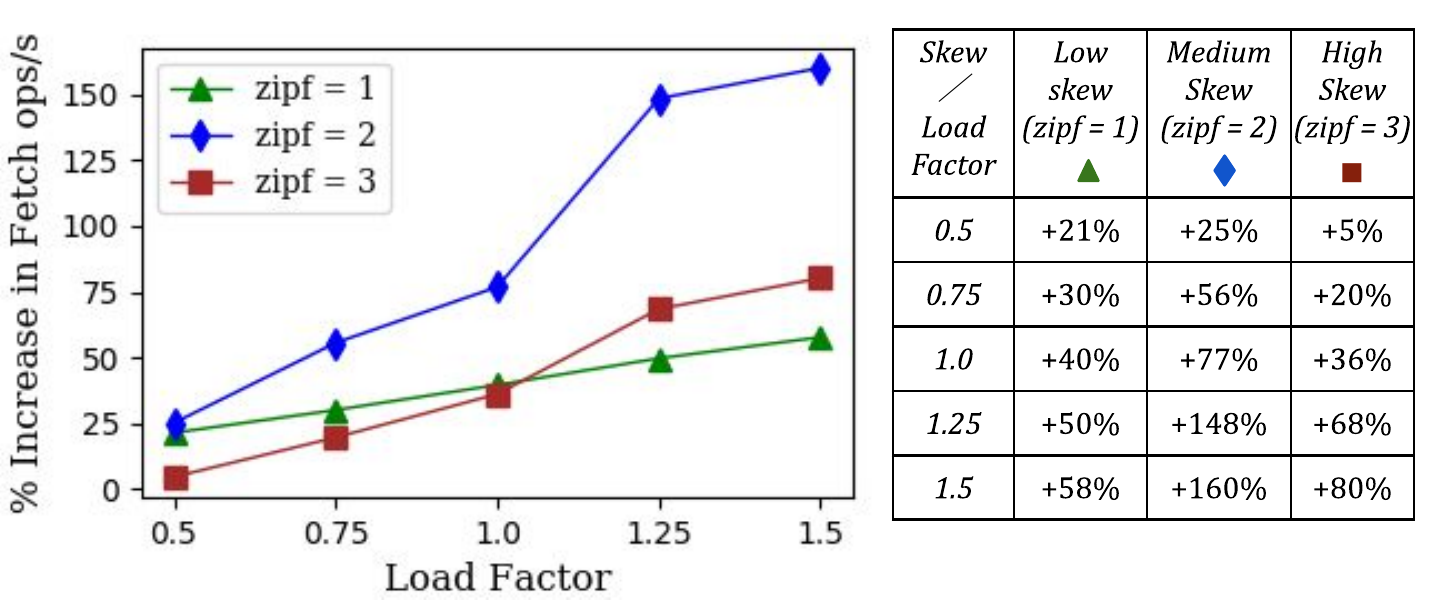}
	\caption{\revision{\textbf{Roofline gain in operation throughput from the VIP vs the Default configuration as the load factor increases. While keeping the number of buckets fixed at $2^{24}$, we increase the load factor from $0.5$ to $1.5$. The performance gain obtained from the VIP configuration increases with the load factor, and can be as high as 160\% (2.6x) for medium skew ($zipf=2$) at load factor $1.5$.}}}
	\label{fig:roofline_lf}
	\vspace{6mm}
\end{figure}

%\begin{table}[]
%	\begin{tabular}{|l|l|l|l|l|}
%		\hline
%		& \multicolumn{1}{c|}{\begin{tabular}[c]{@{}c@{}}Throughput \\ (ops/s)\end{tabular}} & \multicolumn{1}{c|}{\begin{tabular}[c]{@{}c@{}}Avg. \\ Displacement\end{tabular}} & \multicolumn{1}{c|}{L3 Misses}                                  & \multicolumn{1}{c|}{L1 Misses}                                 \\ \hline
%		0.5 & \begin{tabular}[c]{@{}l@{}}235M vs 188M\\ (+25\%)\end{tabular}                     & \begin{tabular}[c]{@{}l@{}}1.0 vs 1.03\\ (-3\%)\end{tabular}                      & \begin{tabular}[c]{@{}l@{}}378M vs 385M\\ (1.8\%)\end{tabular}  & \begin{tabular}[c]{@{}l@{}}380M vs 412M\\ (-8\%)\end{tabular}  \\ \hline
%		1   & \begin{tabular}[c]{@{}l@{}}236M vs 134M\\ (+77\%)\end{tabular}                     & \begin{tabular}[c]{@{}l@{}}1.0 vs 1.17\\ (-15\%)\end{tabular}                     & \begin{tabular}[c]{@{}l@{}}376M vs 387M\\ (-2.6\%)\end{tabular} & \begin{tabular}[c]{@{}l@{}}380M vs 436M\\ (-13\%)\end{tabular} \\ \hline
%		1.5 & \begin{tabular}[c]{@{}l@{}}236M vs 90M\\ (+160\%)\end{tabular}                     & \begin{tabular}[c]{@{}l@{}}1.0 vs 1.62\\ (-38\%)\end{tabular}                     & \begin{tabular}[c]{@{}l@{}}382M vs 392M\\ (-2.6\%)\end{tabular} & \begin{tabular}[c]{@{}l@{}}382M vs 458M\\ (-17\%)\end{tabular} \\ \hline
%	\end{tabular}
%\end{table}

\begin{table}[]
	\caption{\textbf{\revision{Relative Metrics of VIP vs Default configuration as we increase the load factor (\textit{lf}) at $zipf=2$. The trends for low and high skew are similar.}}}\label{tab:roofline_lf_metrics}
	\revision{
	\begin{tabular}{|l|c|c|c|c|}
		\hline
		\textit{lf}& \textbf{\begin{tabular}[c]{@{}c@{}}Throughput \\ (fetch ops/s)\end{tabular}} & \textbf{\begin{tabular}[c]{@{}c@{}}Avg. Disp- \\ -lacement\end{tabular}} & \textbf{\begin{tabular}[c]{@{}c@{}}L3 \\Misses\end{tabular}}                                                   & \textbf{\begin{tabular}[c]{@{}c@{}}L1 \\ Misses\end{tabular}}                                                  \\ \hline
		0.5 & \begin{tabular}[c]{@{}c@{}}235M \textit{vs} \\ 188M\\ \textit{(+25\%)}\end{tabular}      & \begin{tabular}[c]{@{}c@{}}1.0 \textit{vs} 1.03\\ \textit{(-3\%)}\end{tabular}          & \begin{tabular}[c]{@{}c@{}}378M \textit{vs}\\ 385M\\ \textit{(-1.8\%)}\end{tabular}   & \begin{tabular}[c]{@{}c@{}}380M \textit{vs} \\ 412M\\ \textit{(-8\%)}\end{tabular}  \\ \hline
		1   & \begin{tabular}[c]{@{}c@{}}236M \textit{vs} \\ 134M\\ \textit{(+77\%)}\end{tabular}      & \begin{tabular}[c]{@{}c@{}}1.0 \textit{vs} 1.17\\ \textit{(-15\%)}\end{tabular}         & \begin{tabular}[c]{@{}c@{}}376M \textit{vs} \\ 387M\\ \textit{(-2.6\%)}\end{tabular} & \begin{tabular}[c]{@{}c@{}}380M \textit{vs} \\ 436M\\ \textit{(-13\%)}\end{tabular} \\ \hline
		1.5 & \begin{tabular}[c]{@{}c@{}}236M \textit{vs} \\ 90M\\ \textit{(+160\%)}\end{tabular}      & \begin{tabular}[c]{@{}c@{}}1.0 \textit{vs} 1.62\\ \textit{(-38\%)}\end{tabular}         & \begin{tabular}[c]{@{}c@{}}382M \textit{vs}\\ 392M\\ \textit{(-2.6\%)}\end{tabular}  & \begin{tabular}[c]{@{}c@{}}382M \textit{vs} \\ 458M\\ \textit{(-17\%)}\end{tabular} \\ \hline
	\end{tabular}}
\end{table}

	\section{Learning Popularity on-the-fly}\label{sec:learning}
	In this section, we highlight the challenges of learning in-the-loop (\S\ref{sec:learning_is_costly}), which motivated the lightweight mechanisms we built for VIP hashing. We describe how we learn, adapt, sense, and dynamically control the overhead on the fly (\S\ref{sec:vip_hashing}-3).

%In this section, we motivate our mechanisms for discuss our mechanisms to learn, adapt, sense, and dynamically switch-on/off learning. We show how we can learn the popularity distribution on the fly, and adapt to changing hot sets.

\subsection{Learning In-the-Loop is Costly}\label{sec:learning_is_costly}

\begin{figure}
	\centering
	\subfloat[Loss in performance when adding a 1-byte counter per key in the hash table. Both hash tables are identical (in Default configuration) except for the size of the entries (16 vs 17 bytes).]{
		\includegraphics[scale=0.35]{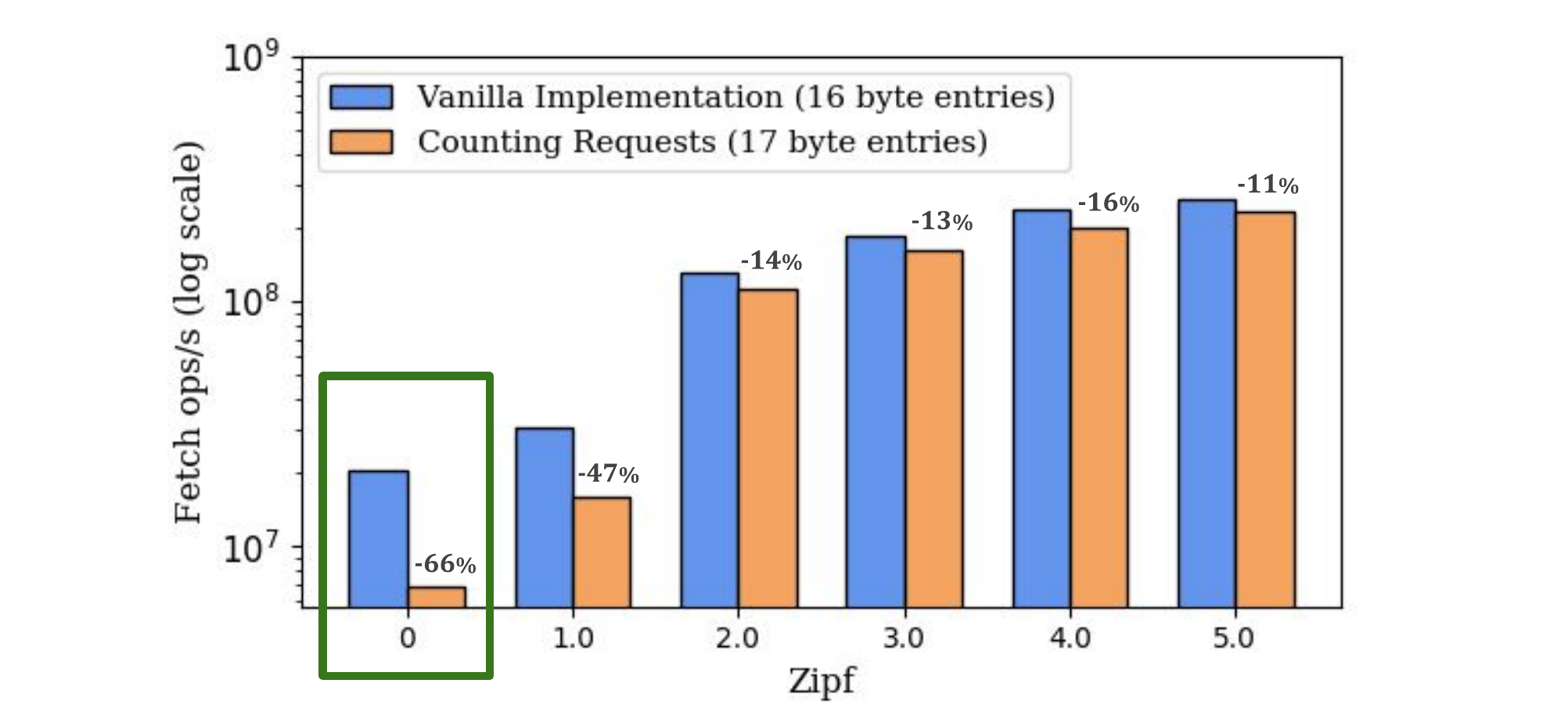}
		\label{fig:adding_counter_all_zipfs}
	}

	\subfloat[Relative metrics for zipf = 0. Instructions executed and cache misses increase after adding the 1-byte counter.]{
		\includegraphics[scale=0.35]{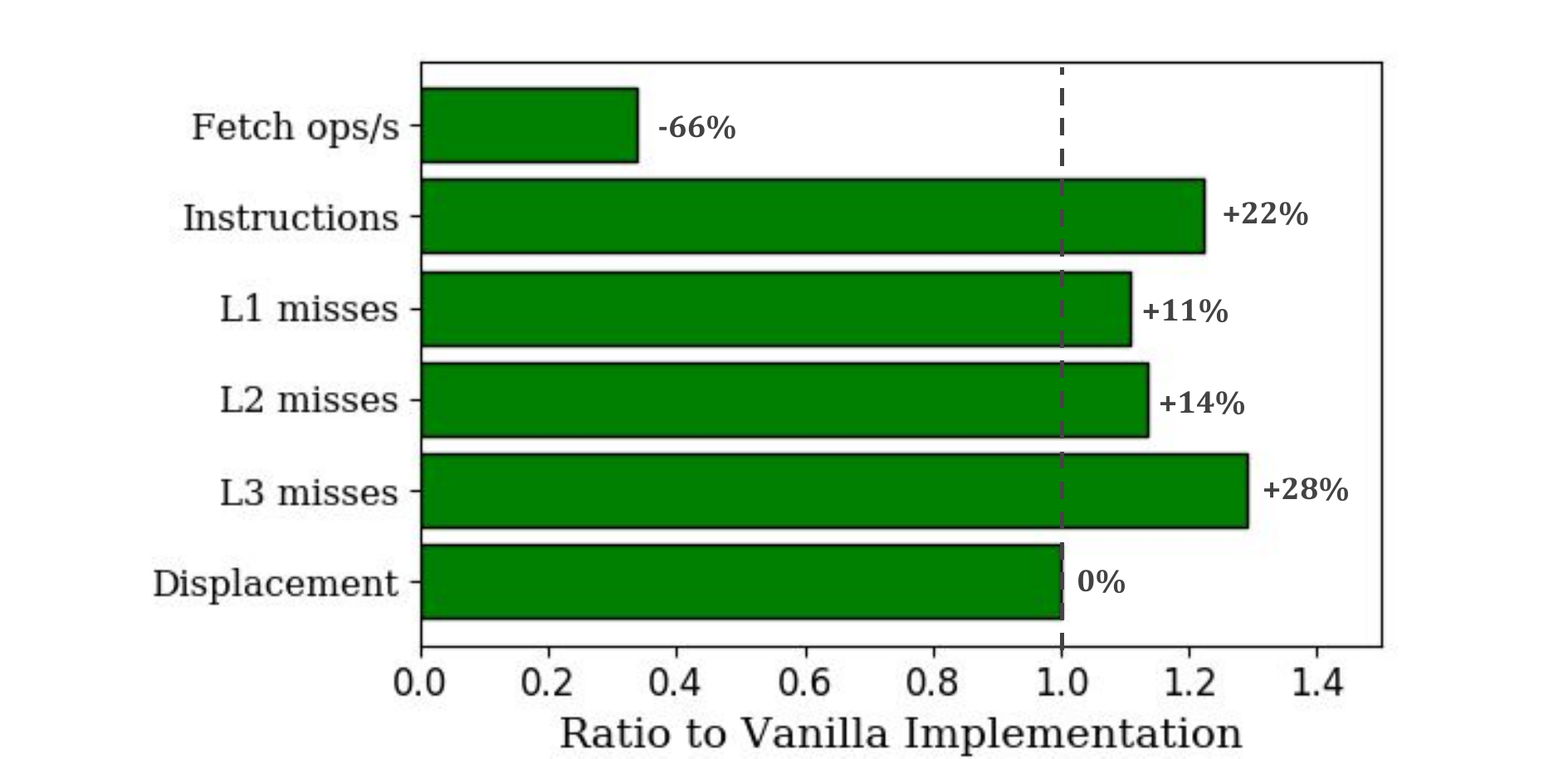}
		\label{fig:adding_counter_zipf0}
	}
	\caption{\textbf{The effect of adding a 1-byte requests counter per key in the hash table. 500M fetch operations are issued to a hash table with 1M keys at load factor $0.95$. Performance can take a significant hit \textendash\ we observe a 66\% loss in fetch operation throughput at $zipf=0$. This experiment demonstrates the challenges of learning in-the-loop with hash tables.}}
	\label{fig:adding_counter}
	\vspace{-3mm}
\end{figure}

Hash tables execute a tight loop of instructions \textendash\ compute the hash function, access keys in the bucket, and perform required operations to process the request. Adding any amount of additional computation or storage to this loop can have a significant impact on the performance. To demonstrate this behavior, we conduct a simple experiment of adding a 1-byte requests counter per key in the hash table, such that the entries become 17 bytes long (8 byte key and value, and 1 byte counter).

We use Wiscer to compare the performance of the vanilla implementation of hash table (16 byte entries) to the implementation with request counters (17 byte entries). We issue 500M fetch requests to a hash table with 1M entries (load factor $0.95=10^6/2^{20}$) for different levels of skew in the popularity distribution ($zipf=0$ to $5$ in steps of $1$). The remaining configuration options of Wiscer are set to the defaults (refer to Table~\ref{tab:wiscer_options}). Fig.~\ref{fig:adding_counter} shows the relative performance of the two hash table implementations at different levels of skew in the workload. There is a significant loss in throughput ranging from 11-66\% due to increase in cache misses and instructions executed.
%%We compare the performance metrics obtained from Wiscer (Fig.~\ref{fig:adding_counter}) for the two implementations.
%Hashing is a tight loop low latency operation. Counting is a basic requirement for learning.
%Show how the performance of fetch operation degrades when a simple 1-byte counter per entry is added. Show how cache misses increase, computation increases as well.

%Show how MRU would perform. This involves adding some computation. Cache misses should remain roughly the same. Its hard to predict how this will behave.

%For zipf values ranging from 0 to 5, different level of loss in performance. For zipf 0, 73\% loss in fetch throughput is observed. For higher levels of skew, different amounts of loss are observed. Hash table of 10M keys. Random seed is 0. 1B fetch operations are issued. Compare throughput to when there is no per-entry counter. Plot generated will be bar chart of relative metrics (normalized by baseline metrics). Increased cache misses and computation are the reasons for this loss. Hashing is extremely cache sensitive and instruction sensitive.

Counting requests is a fundamental requirement for learning the popularity distribution. However, this experiment shows that even adding a small amount of additional memory can hurt performance significantly in the extreme case. Thus, the challenge here is to work with a restricted ``budget" when learning in-the-loop, to balance the gains against the overhead of learning.

%Thus, we are constrained by a strict budget when learning in the looop.

%Any extra memory overhead is going to hurt performance.
%
%Any extra computation is going to hurt performance.

\subsection{VIP Hashing}\label{sec:vip_hashing}
From \S\ref{sec:learning_is_costly}, we know that using additional memory and computation can really hurt the performance of hash tables.
\revision{In this section, we describe how VIP hashing overcomes these challenges} by using lightweight mechanisms for learning and adapting to the popularity distribution (\S\ref{sec:vip_hashing_learning_adapting}), while controlling the overhead by {sensing} and {dynamically switching-on/off} learning as necessary (\S\ref{sec:vip_hashing_sensing_toggle}). We first give an overview of VIP hashing (\S\ref{sec:vip_hashing_overview}) followed by describing the mechanisms used in detail (\S\ref{sec:vip_hashing_learning_adapting}-3).
%From \S\ref{sec:learning_is_costly}, we learnt that any amount of computation and additional storage can really hurt performance in the worst case. VIP hashing uses lightweight mechanisms to learn, sense, adapt, and dynamically swich on/off learning to control the overhead. We first give an overview of VIP hashing, and then discuss each of the mechanisms individually in detail.

\begin{figure*}
	\centering
	\includegraphics[scale=0.41]{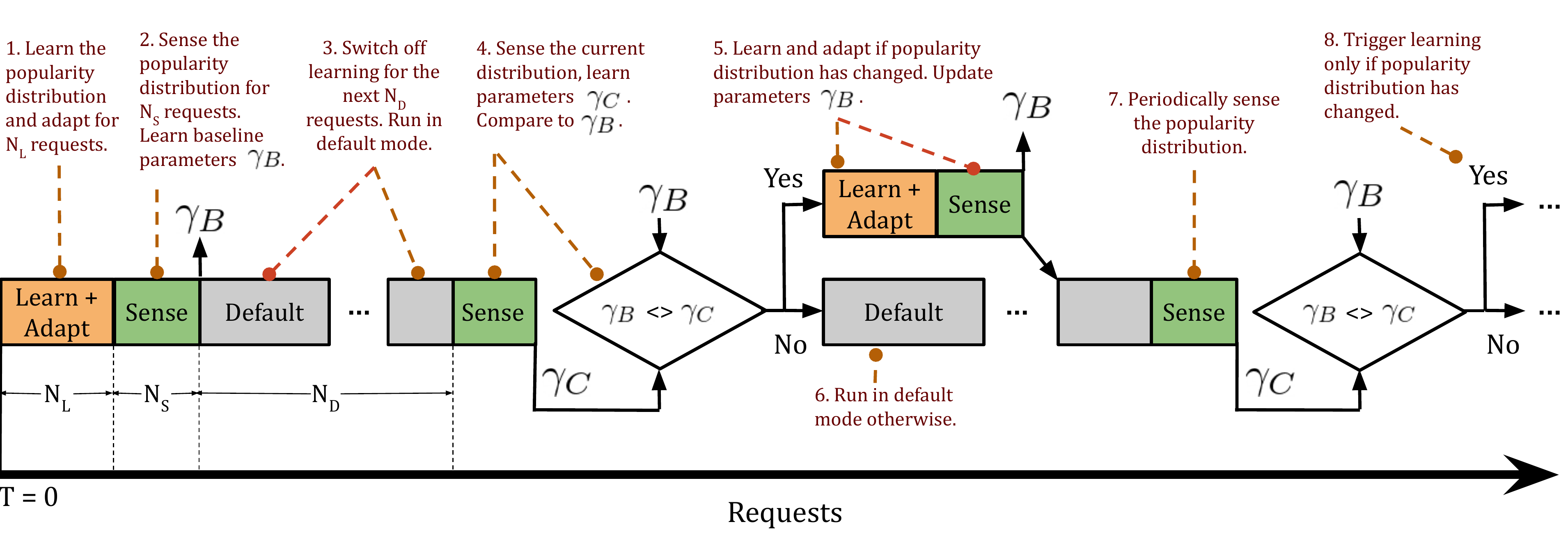}
	\caption{\textbf{Overview of VIP Hashing. \revision{At any time, the hash table is in one of the three modes \textendash\ \textit{learn+adapt}, \textit{sense}, or \textit{default}.} The amount of time spent on learning and adapting is limited since it is costly. The popularity distribution is sensed periodically to detect changes and trigger learning only when necessary.}}
	\label{fig:vip_overview}
\end{figure*}

\subsubsection{Overview}\label{sec:vip_hashing_overview}

Fig. \ref{fig:vip_overview} shows the VIP hashing method. At any given time, there are three possible {modes} that the hash table implementation can be in \textendash\ \textit{learn+adapt}, \textit{sense}, and \textit{default} (or vanilla). In the learn+adapt mode, the hash table learns the popularity distribution and rearranges keys to move closer to the VIP configuration. This mode is costly in terms of both computation and storage, and we control how much we run this mode by configuring the parameter N\textsubscript{L}. The learn+adapt mode is run at the start, and subsequent triggers of this mode happen only if the popularity distribution changes, which is determined during the sense mode.

The sense mode is triggered after the learn+adapt mode to measure some statistics ($\gamma$\textsubscript{B}) that characterize the popularity distribution.  These statistics require a total of 24 bytes of memory for the whole hash table (irrespective of the size) and a few additional arithmetic operations in the loop. Since the memory and computation footprint of this mode is low, it does not add much overhead to the execution. The sense mode is run for N\textsubscript{S} requests at a time, and is triggered periodically (every N\textsubscript{D} requests) to characterize the popularity distribution at the time ($\gamma$\textsubscript{C}). Comparing the statistics ($\gamma$\textsubscript{B} and $\gamma$\textsubscript{C}) helps determine if the popularity distribution has changed, and informs the decision of whether to switch on learning.
% The sense mode is run for N\textsubscript{S} requests (\S\ref{sec:vip_hashing_parameters}).

The default mode is the vanilla implementation of chained hashing (\S\ref{sec:background_ht}) with 16 byte entries. There is no additional overhead of storage or computation. This mode is run most of the time (N\textsubscript{D} $>$ N\textsubscript{L}, N\textsubscript{S}), so the performance is close to the vanilla implementation of hash table in the worst case.

In the following sections, we discuss the mechanisms we use for the learn+adapt (\S\ref{sec:vip_hashing_learning_adapting}) and sense (\S\ref{sec:vip_hashing_sensing_toggle}) modes. We discuss our choice of parameters (N\textsubscript{L}, N\textsubscript{S}, N\textsubscript{D}, etc.) in \S\ref{sec:vip_hashing_parameters}, that allow us to balance the performance gains against the overhead of learning.

%Fig.~\ref{fig:vip_overview} shows how the VIP hashing method progresses as requests are processed. The learn+adapt mode is run at the start for N\textsubscript{L} requests followed by sensing the popularity distribution for the next N\textsubscript{S} requests to obtain statistics $\gamma$\textsubscript{B}. The sense mode is triggered every N\textsubscript{D} requests to compute statistics $\gamma$\textsubscript{C}, which are compared to $\gamma$\textsubscript{B} to determine if the popularity distribution has diverged. Learning is triggered only if the popularity distribution has changed.

\subsubsection{Learning \& Adapting}\label{sec:vip_hashing_learning_adapting}

Algorithm~\ref{alg:learning} describes how we learn the popularity distribution and adapt to the skew on the fly. The popularity of a key is estimated as the proportion of requests made to the key (\S\ref{sec:background_prob}). Thus, learning the popularity distribution requires counting requests, which we know is challenging from the discussion in \S\ref{sec:learning_is_costly}.

\setlength{\textfloatsep}{10pt}
%\vspace{1pt}
\begin{algorithm}[t]
	\caption{Learning and Adapting on-the-fly}\label{alg:learning}
	\begin{algorithmic}[1]
		\Procedure{FetchAdaptive}{requests}
		\State \textit{ht} $\gets$ getHashTable()
		\State /* Requests are counted in a separate data structure*/
		\State \textit{req\_cnt\_ht} $\gets$ getRequestsCountingHashTable()
		\For{\textit{r} \textbf{in} \textit{requests}}
		\State \textit{hash} $\gets$ murmurHash(\textit{r.key})
		\State \textit{ht\_entry} $\gets ht[hash]$
		\State \textit{req\_entry} $\gets req\_cnt\_ht[hash]$
		\State /* Keep track of entry with minimum requests */
		\State \textit{min\_req\_ht\_entry} $=\ ht\_entry$
		\State \textit{min\_req\_entry} $=\ req\_entry$
		\While{$ht\_entry$ \textbf{and} $ht\_entry.key\ \neq\ r.key$}
		\If {$req\_entry.count < min\_req\_entry.count$}
		\State $min\_req\_ht\_entry\ =\ ht\_entry$
		\State $min\_req\_entry\ =\ req\_entry$
		\EndIf
		\State $ht\_entry\ =\ ht\_entry.$next()
		\State $req\_entry\ =\ req\_entry.$next()
		\EndWhile
		\If {$ht\_entry == null$}
		\State $r.found = $ false
		\State \textbf{continue}
		\EndIf
		\State $r.found = $ true	
		\State $r.value\ =\ ht\_entry.value$
		\State $req\_entry.count = req\_entry.count+1$
		\If {$req\_entry.count > min\_req\_entry.count$}
		\State /* Swap this entry with the min requests entry */
		\State swap(\textit{ht\_entry, min\_req\_ht\_entry})
		\State swap(\textit{req\_entry}, \textit{min\_req\_entry})
		\EndIf
		\EndFor
		%		\State \Return \textit{requests}
		\State \revision{/* Reclaim cache space by clearing $req\_cnt\_ht$ */}
		\State \revision{clearCache($req\_cnt\_ht$)}
		\EndProcedure
	\end{algorithmic}
\vspace{-2pt}
\end{algorithm}

\revision{To overcome the challenge of counting requests in-the-loop, we perform two optimizations}. First, we count requests in a separate data structure that mimics the hash table in arrangement (for every entry in the hash table, there is a corresponding entry in the request counting hash table). Although this temporarily requires more memory \revision{(about 50-60\% increase in memory usage depending on the load factor)} than maintaining a counter per key in the hash table, the cost is incurred only during the learn+adapt mode. \revision{Second, at the end of the learn+adapt mode, we clear the request counting hash table ($req\_cnt\_ht$) from the cache by issuing cache flush instructions (\texttt{\_mm\_clflushopt} on Intel CPUs~\cite{intelintrinsics}), thus restricting the cache pollution caused by the additional data structure to learn+adapt mode.}

%To overcome this challenge of counting, we maintain a separate data structure to count requests that mimics the hash table in arrangement (for every entry in the hash table, there is a corresponding entry in the requests counting hash table). Although this requires more memory than maintaining a counter per key in the hash table, the cost is incurred only during the learn+adapt mode. Once the learn+adapt mode is switched off, the cache space occupied by the requests counting hash table ($req\_cnt\_ht$ in Algorithm~\ref{alg:learning}) can be reclaimed without affecting performance in the subsequent sense and default modes.

%It is important to note that we maintain the accesses data structure separate from the hash table while replicating the structure. This is done so that once we switch off the learning mode, the cache space occupied by the separate accesses data structure can be reclaimed by the original hash table. Otherwise we will see a loss in performance similar to Fig. \ref{fig:adding_counter}.

To attain the VIP configuration, we need to sort the keys in descending order of popularity in the bucket chains. Given that the proportion of requests made to a key is an estimate of popularity, we use Algorithm~\ref{alg:learning} to stochastically sort the keys in descending order of requests received on the fly. When performing a fetch operation, we keep track of the entry with minimum requests ($min\_req\_ht\_entry$) encountered in the path to the entry being fetched. If the entry being fetched has received more requests, then it is swapped with the $min\_req\_ht\_entry$ and it moves forward in the chain. We propose the following theorem:

\begin{theorem}
	%	Let $f$ be continuous on $[a,b]$.  If $G$ is
	%	an antiderivative for $f$ on $[a,b]$, then
	%	\begin{displaymath}\int^b_af(t)dt = G(b) - G(a).\end{displaymath}
	Let there be a bucket chain with $n$ keys $K_1,$ $K_2\ ...\ K_n$ which have popularity $p_1 > p_{2}... > p_n > 0$. Let the keys be in a random order in the chain. Then, by applying Algorithm~\ref{alg:learning}, the keys will converge to the sorted order of popularity as number of fetch requests $N \rightarrow\infty$.
\vspace{-2pt}
\end{theorem}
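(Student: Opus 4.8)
The plan is to turn the probabilistic statement into a deterministic sorting argument that takes over once the request counts have ``settled.'' First I would observe that since only fetches occur, every request to $K_i$ succeeds, and the algorithm swaps each count together with its key, so the stored count $c_i(N)$ after $N$ requests equals the number of those requests addressed to $K_i$. Taking the requests to be i.i.d.\ draws from the popularity distribution, $c_i(N)/N \to p_i$ almost surely by the strong law of large numbers, and since $p_1 > p_2 > \cdots > p_n$ the gaps $c_i(N) - c_{i+1}(N)$ tend to $+\infty$. Hence there is an almost-surely finite random time $N_0$ such that for all $N \ge N_0$ the counts are totally ordered consistently with popularity rank and every consecutive gap is at least $2$; the ``at least $2$'' guarantees that the pending $+1$ increment can never reverse the outcome of the comparison in the swap test. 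I would also record that, because each $p_i > 0$, every key is fetched infinitely often almost surely.

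Working with $N \ge N_0$, I would first read off what the swap rule does in this regime: when $K_i$ is found at position $j > 1$, the minimum-count entry among positions $1,\dots,j-1$ is the unique \emph{least popular} key of that prefix, say $K_\ell$, and a swap occurs precisely when that prefix contains a key less popular than $K_i$ (equivalently, $K_\ell$ is less popular than $K_i$), in which case $K_i$ moves to $K_\ell$'s position $m$ and $K_\ell$ moves to position $j$. The heart of the proof is a potential argument: let $\Phi$ be the number of inversions of the chain relative to the sorted-by-popularity order, and show every swap occurring after $N_0$ strictly decreases $\Phi$. Writing $S$ for the keys strictly between positions $m$ and $j$ — all of which are more popular than $K_\ell$ — and partitioning $S = A \sqcup B$ according to whether a key is more popular, or less popular, than $K_i$, a short check of which pairs flip inversion status yields the net change $\Phi_{\text{new}} - \Phi_{\text{old}} = -(2|B| + 1) \le -1$. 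I expect this bookkeeping to be the main obstacle: one has to confirm that the inversions $K_i$ \emph{creates} by leapfrogging the keys of $A$ are always outweighed by the inversions repaired with $K_\ell$ and with the keys of $B$, and this is exactly the place where it matters that the algorithm swaps with the minimum-count predecessor rather than merely the immediate one.

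Since $\Phi$ is a non-negative integer that strictly decreases at each swap after $N_0$, only finitely many swaps occur in all, almost surely, so there is a last swap, at some finite time $N_1$. For $N > \max(N_0, N_1)$ the chain is frozen and its counts are ordered by popularity; if this frozen chain were not sorted it would contain an adjacent pair $K_a, K_b$ in positions $j, j+1$ with $K_b$ the more popular, and the next fetch of $K_b$ — which occurs eventually with probability one — would trigger a swap, contradicting the choice of $N_1$. Therefore the chain is in sorted order, and sorted order is absorbing: if $K_i$ sits at position $i$ the whole prefix $K_1,\dots,K_{i-1}$ is more popular, so no fetch ever triggers a swap. This shows the chain reaches the sorted order in finite time and stays there almost surely, which is somewhat stronger than the stated convergence; note that the random initial arrangement is used only to guarantee a finite starting value of $\Phi$, and the argument in fact works from any initial order.
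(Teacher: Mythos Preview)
Your proof is correct and in fact more careful than the paper's own argument, but it takes a genuinely different route. The paper proceeds by a ``peel off the maximum and recurse'' scheme: it first records (Lemma~1) the law-of-large-numbers fact that for any pair of keys the more popular one eventually has the strictly larger request count, then argues by contradiction (Lemma~2) that the single most popular key $K_1$ must end up at position~$1$ --- if $K_1$ sat at position $d>1$, then on the last request it received it would have been swapped with some lower-count key ahead of it --- and finally applies the same lemma recursively to $K_2,\dots,K_n$. Your approach replaces this recursion with a global potential: once the counts are separated with gap at least~$2$ (your time $N_0$), every swap drops the inversion count $\Phi$ by exactly $2|B|+1\ge 1$, so only finitely many swaps can occur and the frozen chain must be sorted. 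What your argument buys is a clean treatment of the timing issues that the paper leaves informal (the phrase ``on the last request that $K_1$ received'' is heuristic, and the recursive step tacitly assumes $K_1$ stays fixed at the front while the tail sorts itself, which needs justification), together with the slightly stronger conclusion of almost-sure \emph{finite-time} absorption rather than mere convergence. What the paper's decomposition buys is brevity and a direct link to the selection-sort intuition the authors emphasize in \S\ref{sec:vip_hashing_learning_adapting}.
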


We formally prove this theorem in Appendix~\ref{sec:appendix_a}. There are two properties of Algorithm~\ref{alg:learning} that are worth noting. First, the VIPs move to the front quickly, as they can skip over multiple entries in the chain in a single fetch request. This algorithm is, in essence, similar to selection sort as we are moving the entry with minimum requests to the end of the (sub-)chain being accessed. An alternative would be to compare only adjacent keys (bubble sort), which empirically requires more requests for a VIP to move to the front.

Second, the cost of swapping is amortized, as there is at most one swap performed per fetch operation. This approach is faster compared to performing a full sort on every request, or sorting at the end after counting requests for some time (we will have to access all the buckets in order
to perform a full sort; this will incur cache misses and also
pollute the cache).

%Essentially, this algorithm does stochastic sorting. In practice, the VIPs get more accesses and move to the front of the list pretty fast using this algorithm, as it is able to skip multiple intermediate entries at once. Also, there is only one swap happening every fetch request, so the cost of sorting is amortized.

\subsubsection{\revision{Sensing \& Dynamically Switch-on/off Learning}}\label{sec:vip_hashing_sensing_toggle}

Algorithm~\ref{alg:sensing} describes how we sense some key statistics of the popularity distribution, which enable us to dynamically switch-on learning only when the distribution has changed (Algorithm~\ref{alg:dynamic_toggle}). While there are multiple ways to quantify the difference between two probability mass functions (\textit{pmf}s) ~\cite{kldivergence,ztest,hellingers}, we choose a lightweight statistic to compare distributions \textendash\ \textit{average displacement}. In \S\ref{sec:roofline_results}, we saw that displacement encountered indicates the ``goodness" of the hash table configuration. Every popularity distribution imposes a pmf over the displacement encountered on a request, which is a derived random variable. Formally stated:
\newtheorem{axiom}{Axiom}
\begin{axiom} Let $K_1,\ K_2,\ ...,\ K_N$ be $N$ keys in the hash table with popularity $p_1,\ p_2,\ ..., p_N$ $\left(\sum p_i=1\right)$ at displacement $d_1,\ d_2,\ ...,\ d_N$ $\left(d_i \leq N\right)$. Let D be the random variable of the displacement encountered on a successful fetch request. Then,
\vspace{-2mm}
\begin{displaymath}
	P(D=d) = \sum_{i=1}^{N}p_i\cdot\mathbbm{1}_{d_i=d}
\end{displaymath}
\end{axiom}
\vspace{-1mm}
i.e, the probability that displacement $d$ is encountered on a successful fetch request is the probability that any of the keys with displacement $d$ were fetched. The average displacement is calculated as
\vspace{-1mm}
\begin{displaymath}
\mu_{D} = E[D] = \sum\limits_{i=1}^{N}i\cdot P(D=i)
\end{displaymath}
%\vspace{-1mm}
We make the following observation:
\begin{axiom}
	 The VIP configuration minimizes E[D] over all possible arrangements of keys in the hash table for a fixed load factor, popularity distribution, and hash function.
\end{axiom}
The VIP configuration orders keys by popularity, thus giving more ``weight'' to lower values of $D$ which minimizes the average displacement. It is straightforward to see that for a given hash table configuration, two popularity distributions with different average displacement will not be identical (although the opposite is not true). Thus, a change in average displacement reflects a shift in the popularity distribution.

%Axiom 2 change in avg displacement means that popularity distribution has shifted. Other way round is not true, but it is ok.
%
%Basically, avg displacement is a measure of the goodness of the hash table arrangement. VIP configuration will have the minimum avg displacement.

\begin{algorithm}[t]
	\caption{Sensing}\label{alg:sensing}
	\begin{algorithmic}[1]
		\Procedure{FetchSensing}{requests}
		\State \textit{ht} $\gets$ getHashTable()
		\State /* Metrics to track */
		\State \textit{disp} $\gets 0$ \algorithmiccomment{{cumulative displacement}}
		\State \textit{disp\_sq} $\gets 0$ \algorithmiccomment{{cumulative disp. square}}
		\State \textit{count} $\gets 0$ \algorithmiccomment{number of requests}
		\State \textit{c} $= 0.95$ \algorithmiccomment{confidence level of the interval}
		\For {\textit{r} \textbf{in} \textit{requests}}
		\State \textit{hash} $\gets$ murmurHash(\textit{r.key})
		\State \textit{ht\_entry} $\gets ht[hash]$ 
		\State \textit{d} $\gets 1$
		\While {\textit{ht\_entry} \textbf{and} \textit{ht\_entry$\rightarrow key \neq r.key$}}
		\State \textit{ht\_entry} $=\ ht\_entry.$next()
		\State $d = d + 1$
		\EndWhile
		\If {$ht\_entry == null$}
		\State $r.found =$ false
		\State \textbf{continue}
		\EndIf
		\State $r.found =$ true
		\State $r.value = ht\_entry.value$
		\State $count = count + 1$
		\State $disp = disp + d$
		\State $disp\_sq = disp\_sq + d\times d$
		\EndFor
		\State /* Estimating mean \textit{u}, variance \textit{v}, and C.I. width \textit{w}*/
		\State $u = disp/ count$
		\State $v = disp\_sq/(count-1) - disp^2/(count*(count-1))$
		\State $w = \sqrt{-2.v.log(1-c)/count}$ \algorithmiccomment{Gaussian tail bound}
		\State $\gamma = (u, w)$
		\State \Return $\gamma$
		\EndProcedure
	\end{algorithmic}
\end{algorithm}

The parameters we learn from sensing are $\gamma =(\hat{\mu}_D, \hat{w}_D) = (u,w)$ (Algorithm~\ref{alg:sensing}), where $\hat{\mu}_D$ is the estimated average displacement, and $\hat{w}_D$ is the width of the confidence interval around $\hat{\mu}_D$ obtained using Gaussian tail bounds (\S\ref{sec:background_prob}). We estimate the average displacement as
\begin{algorithm}
	\caption{\revision{Dynamically Switch-on/off Learning}}\label{alg:dynamic_toggle}
	\begin{algorithmic}[1]
		\Procedure{HasDistributionChanged}{$\gamma_B, \gamma_C$}
		\State $(u_B, w_B) = \gamma_B$
		\State $(u_C, w_C) = \gamma_C$
		%		\State $diff = |u_B - u_C5|$
		\If {$|u_B - u_C| > (w_B + w_C)$}
		\State \Return true
		\Else
		\State \Return false
		\EndIf
		\EndProcedure
	\end{algorithmic}
\end{algorithm}
\begin{displaymath}
	\hat{\mu}_{D} = \frac{\sum\limits_{i=1}^{N_S}D_i}{N_S}
\end{displaymath}
which is the sample mean\footnote{Note that instead of sampling, we could also use the request counting data structure ($req\_cnt\_ht$ in \S\ref{sec:vip_hashing_learning_adapting}). However, this would incur cache misses and also pollute the cache affecting performance (\S\ref{sec:learning_is_costly}).} of displacement encountered $D_i$ ($1\leq i\leq N_S$) over $N_S$ fetch requests in the sense mode. Similarly, we also estimate sample variance $\hat{\sigma}_D^2$ (\S\ref{sec:background_prob}).

We further characterize the pmf by building a confidence interval using Gaussian tail bounds (\S\ref{sec:background_prob}). The width ($
\hat{w}_D$) of the interval at confidence level $c$ ($c=0.95$ in our experiments) is calculated as
\begin{displaymath}
	\hat{w}_D = \sqrt{\frac{-2\cdot\hat{\sigma}_D^2\cdot(1-c)}{N_S}}
\end{displaymath}
Note that $\hat{\sigma}_D$ is estimated variance from a sample of $N_S$ observations, and $(\hat{\mu}_D-\mu_D)$ only approximately Gaussian according to CLT (\S\ref{sec:background_prob}). Thus, the width $\hat{w}_D$ obtained by applying Gaussian tail bounds is a heuristic.

We switch-on learning (Algorithm~\ref{alg:dynamic_toggle}) only if we detect a significant change in the average displacement. Given two sets of parameters $\gamma_B=(u_B,w_B)$ and $\gamma_C=(u_C,w_C)$ where $u_B$ and $u_C$ are estimated means, we check if the confidence intervals are disjoint. If so, then heuristically with a probability $c^2=(0.95)^2=0.9$, we can be sure that the real means are not equal and the distributions have diverged. Thus, we detect changes in popularity distribution in a non-intrusive manner by computing lightweight statistics.

%We characterize the popularity distribution by building a confidence interval using the gaussian tail bound. We estimate mean and variance using sampling. Using the gaussian tail bound, we obtain the width of the confidence interval with 95\% confidence. This is a heuristic value.
%
%To compare the distribution, we check if the confidence intervals overlap. Specifically, some math here.
%
%If the confidence intervals do not overlap, we can be 90\% confident that avg displacement value has changed, and by Axion 2, the probability distribution has changed as well.

\subsection{\revision{Parameters}}\label{sec:vip_hashing_parameters}

\revision{The parameters N\textsubscript{L}, N\textsubscript{S}, and N\textsubscript{D} determine how long the hash table runs in learn+adapt, sense, and default modes respectively. Our goal is to choose these parameters such that the gains of learning are balanced against the overhead.}

\revision{Our choice of parameters is general, made using theoretical and empirical evidence that is independent of the popularity distribution. Thus, our techniques (\S\ref{sec:vip_hashing}) apply to any distribution with skew irrespective of its specific properties. Note that it is possible to further tune the parameters and the techniques with additional knowledge such as total number of requests, patterns in the workloads, family of distribution, etc.}

%	 We use theoretical and empirical evidence inform the choice of these parameters. %\revision{Note that this is a general choice of parameters made while planning for the worst case. With additional knowledge (and/or assumptions) about the nature of the distribution, amount of skew, known patterns in popularity of keys, total number of requests to be issued, the request distribution, etc. the choice of these parameters can be further improved.}}

\subsubsection{\revision{Allocating the budget for learning $\textendash$ N\textsubscript{L} \textit{vs} N\textsubscript{D}}}\label{sec:learning_budget}

Learning in-the-loop is costly \textendash\ in our experiments, we find that the learn+adapt mode can be as much as $4x$ slower than the vanilla implementation in the worst case under no skew (we tested different hash table sizes from 1M to 100M keys). If a total of $(N_L+N_D)$ requests are issued to VIP hashing, the loss in throughput due to learning would be:
\begin{displaymath}
	1 - \frac{T_{vanilla}}{T_{vip}} \leq \Big(1-\frac{N_D.t+N_L.t}{N_D.t+N_L.4.t}\Big)
\end{displaymath}

assuming that the vanilla implementation takes time $t$ on an average to process each request. We cap the overhead of learning to at most 5\% by choosing $N\textsubscript{D}=60\cdot N\textsubscript{L}$ in our experiments (i.e, learn+adapt mode is run for at most $\sfrac{1}{61}$ of the total requests). More generally, the cap on overhead is ($1-\sfrac{61}{(60+k)}$), where $k$ depends on the experimental configuration ($k=4$ on our hardware). Thus, fixing a budget for $\sfrac{\text{N\textsubscript{L}}}{\text{N\textsubscript{D}}}$ limits the overhead of learning in the worst case.

\subsubsection{\revision{Choosing $N\textsubscript{L}$ $\textendash$ how much to learn?}}

The learn+adapt mode is run for $N_L$ requests at a time. \revision{Our goal is to capture the popularity distribution as much as possible while learning for a finite number of requests.} From previous work~\cite{statisticalcomplexity}, we know that it takes $\Theta(N)$ i.i.d. samples to learn a probability mass function over $N$ items (with error $\epsilon=1$ in KL divergence compared to the true pmf). \revision{When the cardinality of the hash table is not known/can vary, we choose $N_L = 1.5\cdot (htsize)$, i.e, 1.5 times the number of buckets in the hash table. Since we maintain a load factor of at most $1.5$ at all times, the number of keys in the hash table $N\leq 1.5\cdot htsize$, which satisfies our requirements.}

\subsubsection{\revision{Parameters for sensing $\textendash$ $N\textsubscript{s}$ and \textit{c}}}

We sense the distribution for N\textsubscript{S} requests at a time to estimate the average displacement $\hat{\mu}_{D}$ and build an interval with confidence $c$. Since the load factor is low and the longest chain length is likely to be low as well (except in pathological cases where many keys are hashed to the same bucket), we have found that choosing $N_S$ to be a large number (1000) has been sufficient in our experiments. We build a $c=95\%$ confidence interval that heuristically gives us a probability of $c^2=(0.95)^2=0.9$ when we detect a shift in popularity. By increasing (decreasing) the confidence level, we can be less (more) sensitive to changes in popularity.

	\section{\revision{Applications}}\label{sec:evaluation}
	%\revision{In this section, we evaluate VIP hashing for two common uses of hash tables \textendash\ primary key-foreign key hash joins (\S\ref{sec:pk_fk_join}), and point lookups in  (\S\ref{sec:point_lookups}).}

\subsection{\revision{PK-FK Hash Joins}}\label{sec:pk_fk_join}

\revision{Hash tables are frequently used in database systems for processing join queries. In this section, we describe how VIP hashing can improve the performance of primary key-foreign key (PK-FK) hash joins in the presence of skew.}

\subsubsection{\revision{Experimental Setup}}
\revision{Motivated by past research~\cite{alonso,spyros,hjpaper}, we consider the canonical PK-FK join query on tables $R$ and $S$ ($|R| \leq |S|$) with 8-byte integer attributes (16-byte tuples). Skew can arise in PK-FK relations~\cite{spyros,alonso} when some keys occur more frequently than others in the outer relation $S$. We use Wiscer to instantiate $R$ and $S$ using the sequential key pattern for primary keys in $R$,  and varying the level of skew in the outer relation $S$ from uniform ($zipf=0$) to high ($zipf=3$) for 10 distinct random seeds. We compare the performance of the canonical hash join algorithm~\cite{alonso,hjpaper} implemented using the default and VIP hash tables, while materializing pointers to output tuple pairs. We assume that the tuples in $S$ are i.i.d, i.e, the popularity distribution is static. We explore effects of dynamic popularity distribution in \S\ref{sec:point_lookups}.}

\subsubsection{\revision{Default vs VIP Hash Join}}

\begin{figure}
	\centering
	\includegraphics[scale=0.46]{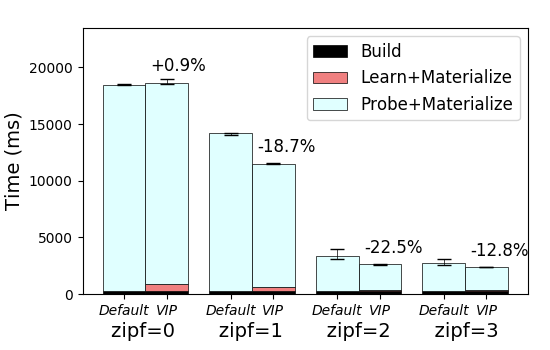}
	\caption{\revision{\textbf{Performance of PK-FK canonical hash join on tables $R$ and $S$ ($|R|:|S| = 1:16$), when using the default and VIP hash table implementations. For medium skew, we observe a 22.5\% reduction in median (over 10 random seeds) total execution time.}}}
	\label{fig:eval_hashjoin}
	\vspace{4mm}
\end{figure}

\begin{table}[]
	\caption{\revision{\textbf{Relative metrics for default and VIP hash join at $zipf=2$, $|R|:|S| = 1:16$.}}}
	\label{tab:hashjoin_metrics}
	\centering
	\revision{
		\begin{tabular}{l|c|c|c}
			\hline
			\textit{Metric}            & \textit{Default} & \textit{VIP} & \textit{Diff} \\ \hline
			\textit{Time}              & 3.4s             & 2.6s         & -22.5\%       \\ 
			\textit{Avg. Displacement} & 1.23             & 1.0003       & -18.7\%       \\ 
			\textit{L3 Misses}         & 75.5M            & 75.3M        & -0.3\%        \\ 
			\textit{L2 Misses}         & 127.9M           & 124.6M       & -2.6\%        \\ 
			\textit{L1 Misses}         & 161.2M           & 155.7M       & -3.4\%        \\ 
			\textit{Instructions}      & 8.5B             & 8.2B         & -3.5\%        \\ \hline
		\end{tabular}
	}
\end{table}

\revision{Fig.~\ref{fig:eval_hashjoin} shows the relative execution time of the default vs VIP hash join implementations. The cardinalities of $R$ and $S$ are 12M and 192M respectively ($|R|:|S| = 1:16$)~\cite{spyros,alonso}, and the load factor is $1.4$ ($= 12\cdot 10^6/2^{23}$). For medium skew in the outer relation, the average displacement encountered by the default hash join implementation is $1.23$ (Table~\ref{tab:hashjoin_metrics})\footnote{\revision{Note that the average displacement is low for the default configuration in this case, since the keys are sequential. Holding the load factor constant, randomly generated keys result in a median (over 10 random seeds) average displacement of 1.48.}}.}

\revision{For the case of canonical hash join query, the learning budget of the VIP hash table implementation can be calculated in advance while maintaining $N_L:N_D = 1:60$ (\S\ref{sec:vip_hashing_parameters}) since we almost always know the cardinalities of the relations from system catalogs. Learning is triggered at the beginning of the probe phase with a budget of $N_L = min(|R|,\  \frac{|S|}{61})$ $= \frac{16\cdot|R|}{61} = 0.26\cdot|R|$ lookups from the outer relation. Learning takes about 3\% of the total execution time, ranging from 70-600ms depending on the level of skew. Note that the average displacement of the VIP hash join implementation is very close to 1 (Table~\ref{tab:hashjoin_metrics}) indicating that the learning mechanism efficiently captures the popularity distribution, and reduces cache misses and instructions executed.}

\revision{To show the impact of varying the learning budget, we repeated the experiment for lower and higher cardinality ratios. For a ratio of $1:4$, we have a learning budget of $\frac{4\cdot|R|}{61}=0.07\cdot|R|$ requests and the overall reduction in execution time is 18.6\%. On the other hand, a cardinality ratio of $1:64$ allows a learning budget of $|R|$ $=min(|R|,\frac{64\cdot|R|}{61})$ and results in 25.8\% reduction in execution time. Thus, the available learning budget impacts the gain in performance.}

% while adjusting the learning budget to $(0.07\cdot|R|)$ and $|R|$ respectively. Given that the learning budget is lower for the former case, we obtain a reduction in total execution time of 18.6\%. On the other hand, we obtain a reduction of 25.8\% in execution time for the 

\subsubsection{\shepherd{Application to Skewed TPC-H}}

\begin{figure}
	\centering
	\includegraphics[scale=0.18]{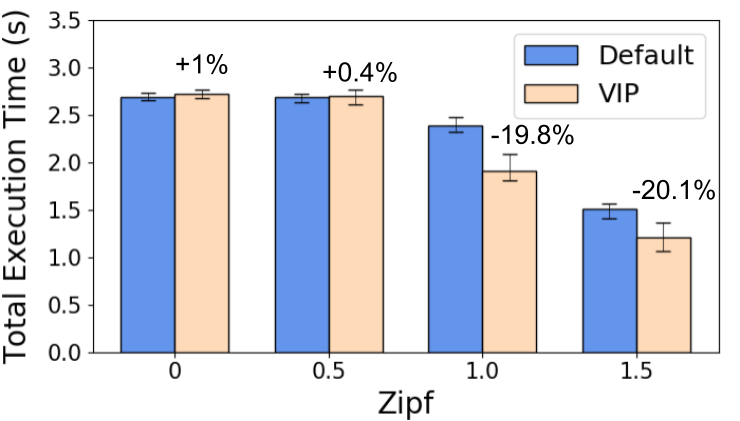}
	\vspace{-2mm}
	\caption{\shepherd{\textbf{Execution time of TPC-H query 9 (scale factor = 1) on DuckDB. VIP hashing speeds up PK-FK hash join probes, and results in 20\% reduction in median (over 10 random seeds) end-to-end query execution time at $zipf=1$ and $zipf=1.5$.}}}
	\label{fig:tpch}
	\vspace{2mm}	
\end{figure}

\shepherd{We focus our attention on TPC-H query 9~\cite{tpch}, which is the most expensive TPC-H query involving multiple PK-FK join operations. We implemented VIP hashing in DuckDB~\cite{duckdb}, an in-memory vectorized DBMS, to speed up PK-FK hash joins in single-threaded mode. Fig.~\ref{fig:tpch} shows the median execution time of VIP hash join relative to the default, tested on skewed TPC-H data (Appendix~\ref{sec:appendix_b}) at varying levels of skew (from $zipf=0$ to $zipf=1.5$) for 10 different random seeds. VIP hash join reduces the end-to-end query execution time by 20\% at $zipf=1$ and $zipf=1.5$, while the increase in execution time at lower skew is negligible. The remaining TPC-H queries spend $<$ 1\% of the total execution time in skewed PK-FK hash joins, and consequently the impact of VIP hashing is negligible.}

%This query spends majority of the time in PK-FK hash joins \textendash\ 45\% of the execution time is spent on probing the hash table for PK-FK hash join at zipf=1.5.
%We evaluate VIP hashing for skewed TPC-H data~\cite{} on DuckDB~\cite{}, an in-memory DBMS with a fully vectorized query execution engine, as the platform. We test VIP hashing on skewed TPC-H data~\cite{} Fig.~\ref{fig:tpch}. We implemented VIP hashing in DuckDB~\cite{}, an in-memory DBMS with a fully vectorized query execution engine.

\subsection{\revision{Point Queries}}\label{sec:point_lookups}

% Our goal is to explore if VIP hashing is able to adapt to the skew in popularity, while minimizing the overhead of learning.

\revision{Another common use of hash tables is for in-memory indexing in database systems~\cite{mysql,mariadbhash} and in key-value stores~\cite{redis,memcached} for processing point queries. In this section, we evaluate VIP hashing against a range of workloads generated using Wiscer that highlight the robustness of our techniques for learning in-the-loop under different conditions. In all the experiments, we assume no prior knowledge of the characteristics of the request distribution.} The first two workloads (\S\ref{sec:eval_static_popularity}-\S\ref{sec:popularity_churn}) involve fetch operations, and the last two (\S\ref{sec:eval_steady_state}-\S\ref{sec:eval_read_mostly}) perform insert and delete operations.

We run these workloads on a hash table with 1M entries \revision{(load factor 0.95 $= 10^6/2^{20}$)} in the Default configuration at the start. \shepherd{Each of these workloads issue 500M operations to the hash table at varying levels of skew ranging from uniform ($zipf=0$) to medium skew ($zipf=2$).} The remaining configuration options of Wiscer are set to the defaults (refer to Table~\ref{tab:wiscer_options}). We compare the performance of VIP hashing to the default hash table in Fig.~\ref{fig:vip_results_static}-\ref{fig:vip_results_read_mostly}.

% Somewhere I have to mention that we are only concerned about fetch requests. Inserts and deletes are one time operations, so they don't matter.

% Update operations are pretty much handled like fetch operations in hash tables. So we only consider fetch operations in our analysis.

%\subsection{Worst case - Uniform Popularity}\label{sec:uniform_popularity}
%
%Uniform distribution (zipf = 0) read only workload. Since there is no popularity skew, there can't be a change in popularity either.

\subsubsection{Static Popularity}\label{sec:eval_static_popularity}

In this workload, the popularity of keys in the hash table remains the same throughout the experiment. \shepherd{We run 500M fetch operations at four levels of skew from $zipf=0$ to $zipf=2$ in steps of $0.5$.} For the case of uniform popularity distribution ($zipf=0$), the overall loss in throughput is \revision{2\%} (Fig.~\ref{fig:static_popularity_zipf0}) which is within our allocated budget of 5\% (\S\ref{sec:learning_budget}), whereas for low skew ($zipf=1$), we obtain a net gain of \revision{22\%} (Fig.~\ref{fig:static_popularity_zipf1}). \shepherd{The performance gain is higher at medium levels of skew \textendash\ the gain in throughput at $zipf=1.5$ and $zipf=2$ is 77\% (Fig.~\ref{fig:static_popularity_zipf1_5}) and 116\% (Fig.~\ref{fig:static_popularity_zipf2}) respectively.} Since the popularity distribution is static, the learn+adapt mode is triggered only at the start of the experiment \revision{for $1.5\cdot htsize$ requests} in all cases. The periodic runs of the sense mode do not detect a change in popularity and the learn+adapt mode is not triggered again. Thus, learning is run only when necessary, and the overhead of VIP hashing is minimized

\begin{figure*}
	\centering
	\subfloat[Static popularity (\S\ref{sec:eval_static_popularity}) with $zipf=0$ (uniform distribution). Since there is no skew in popularity, no performance gain can be obtained from VIP hashing. Learning adds overhead to VIP hashing \revision{(4\textit{x} slower)}, and is only triggered at the start for \revision{$(1.5\cdot 2^{20})$ requests (0.3s)}. Subsequent sensing of the popularity distribution does not detect any change, and learning is not triggered. \revision{Total loss in throughput is 1.9\%, which is within our allocated budget.}]{
		\includegraphics[scale=0.46]{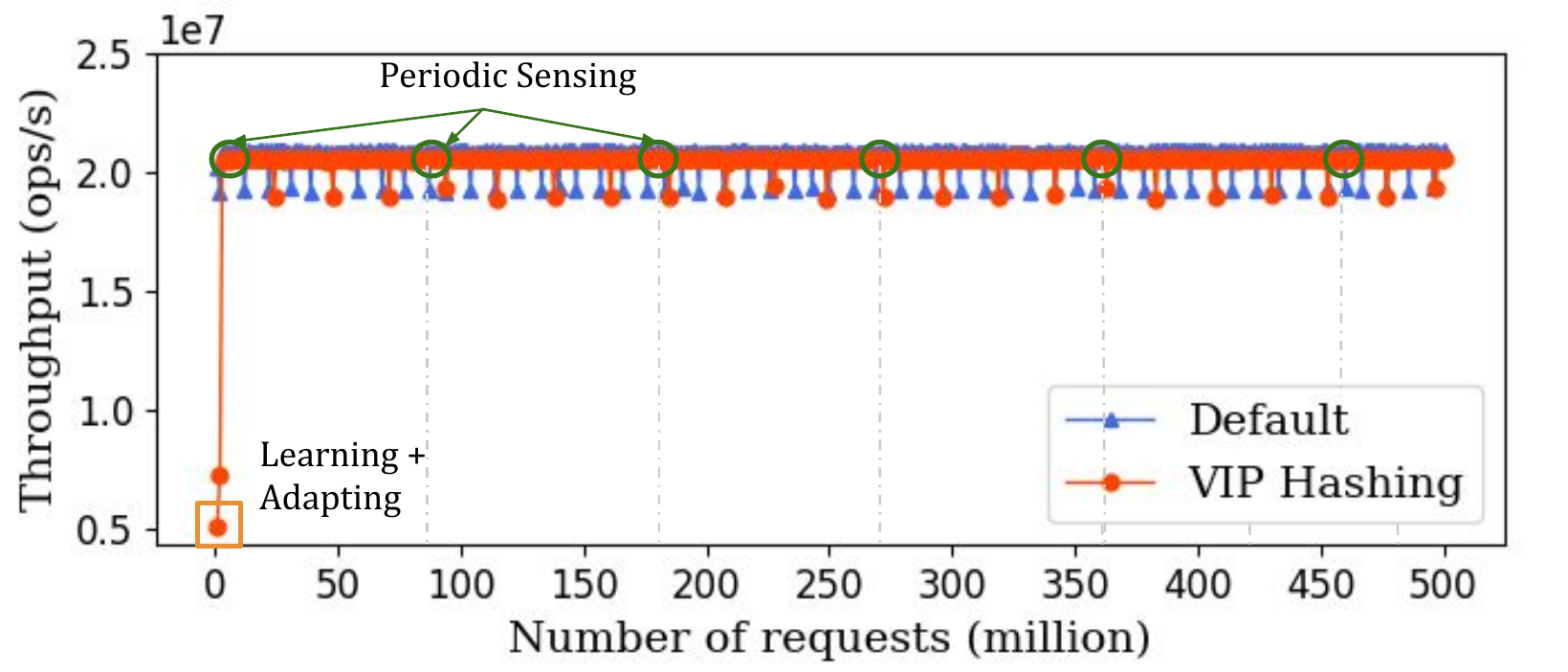}
		\label{fig:static_popularity_zipf0}
	}
	\hspace{4mm}
	\subfloat[Static popularity (\S\ref{sec:eval_static_popularity}) with $zipf=1$ (low skew). Learning is only triggered at the start and is 3x slower than the default \revision{(0.13s vs 0.05s respectively)}. Sensing does not detect any changes to the popularity distribution, so learning is not triggered again. The overhead of learning is offset by the gain in performance from the VIP configuration. \revision{We observe an overall increase in throughput of 21.8\%.}]{
		\includegraphics[scale=0.46]{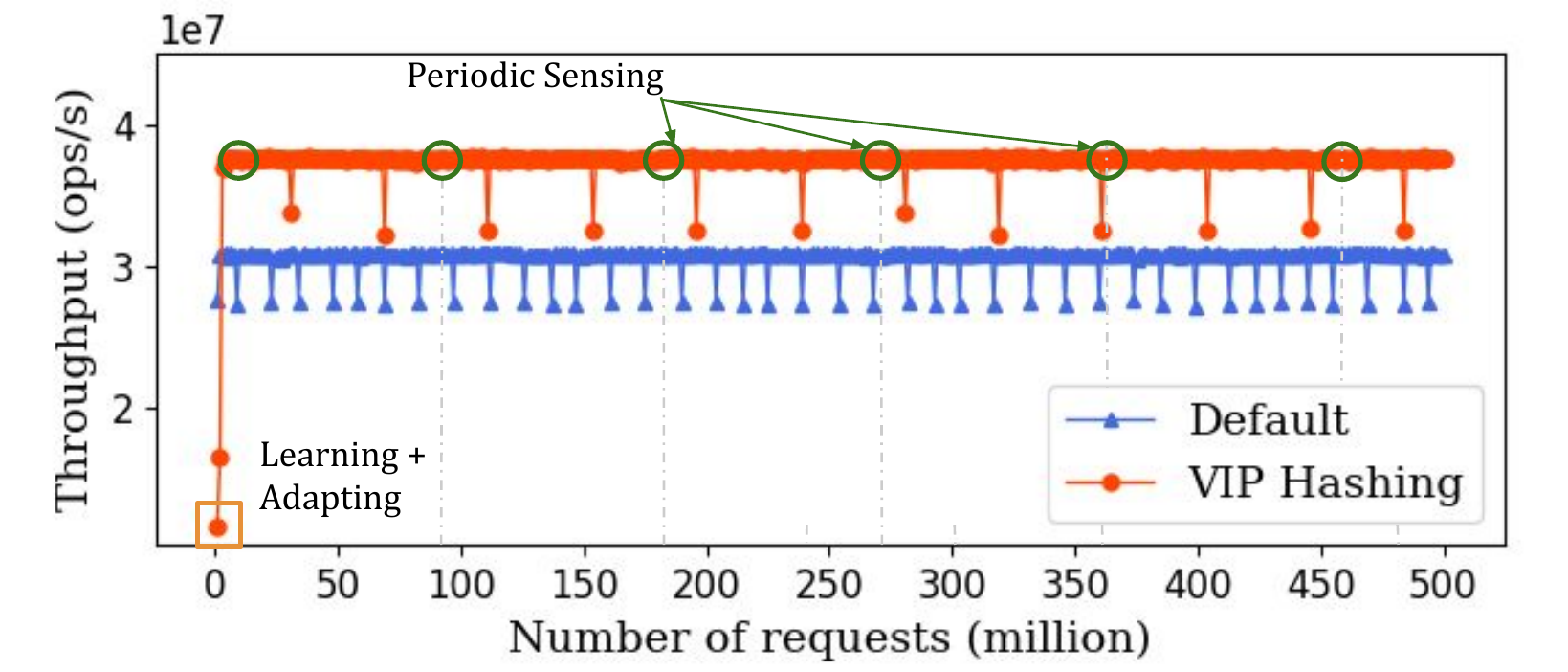}
		\label{fig:static_popularity_zipf1}
	}
	\vspace{3mm}
	\subfloat[\shepherd{Static popularity (\S\ref{sec:eval_static_popularity}) with $zipf=1.5$ (medium skew). Learning is only triggered at the start for 0.03s, and subsequent triggers of sense mode do not detect any changes to popularity. A net gain of 76.5\% in throughput is observed.}]{
		\includegraphics[scale=0.46]{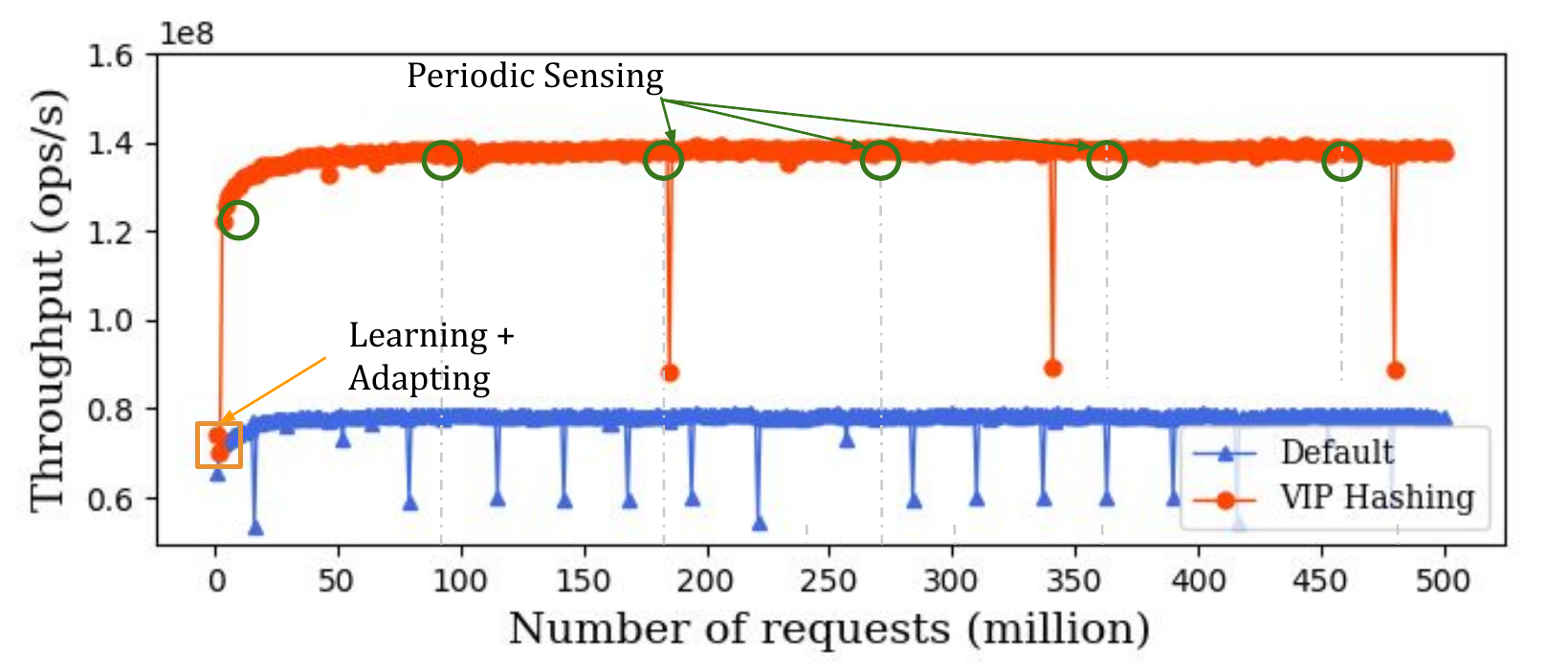}
		\label{fig:static_popularity_zipf1_5}
	}
	\hspace{4mm}
	\subfloat[\shepherd{Static popularity (\S\ref{sec:eval_static_popularity}) with $zipf=2$ (medium skew). Learning is only triggered at the start for 0.02s, and subsequent triggers of sense mode do not detect any changes to popularity. A net gain of 116.1\% in throughput is observed.}]{
		\includegraphics[scale=0.46]{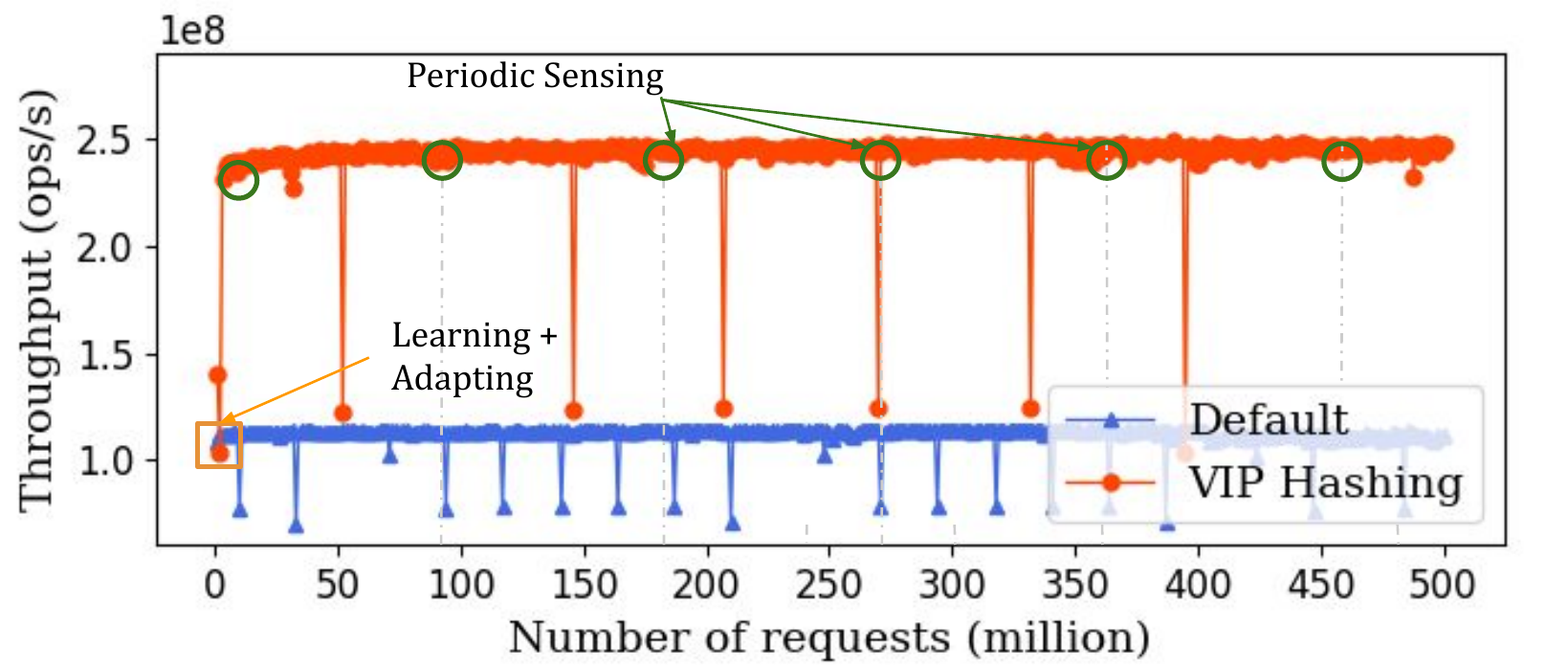}
		\label{fig:static_popularity_zipf2}
	}
	\caption{\textbf{\shepherd{Performance of VIP hashing under static popularity distribution at increasing levels of skew ranging from $zipf=0$ (uniform distribution) to $zipf=2$ (medium skew). Fetch requests are issued to a hash table with 1M entries at load factor 0.95 with keys in a random order initially. Learning is only triggered at the start\textsuperscript{2}, and the performance gain ranges from 22\% to 116\% depending on the level of skew. The overhead of VIP hashing in the case of uniform distribution ($zipf=0$) is 2\%, which is within our allocated budget.}}}
	\label{fig:vip_results_static}
\end{figure*}
%\footnote{The triggers of sense mode and learn+adapt mode have been marked using green circles and orange squares respectively. The unmarked periodic dips in throughput for both the VIP and default implementations are due to monitoring activity performed by the Cloudlab environment, and are unrelated to our techniques.}
\begin{figure*}
	\centering
	\subfloat[Medium churn rate (\S\ref{sec:popularity_churn}) with $zipf=1$. Sensing triggers learning 3 times, when it detects a significant difference in average displacement. \revision{Throughput increases by 18.9\% overall}.]{
		\includegraphics[scale=0.46]{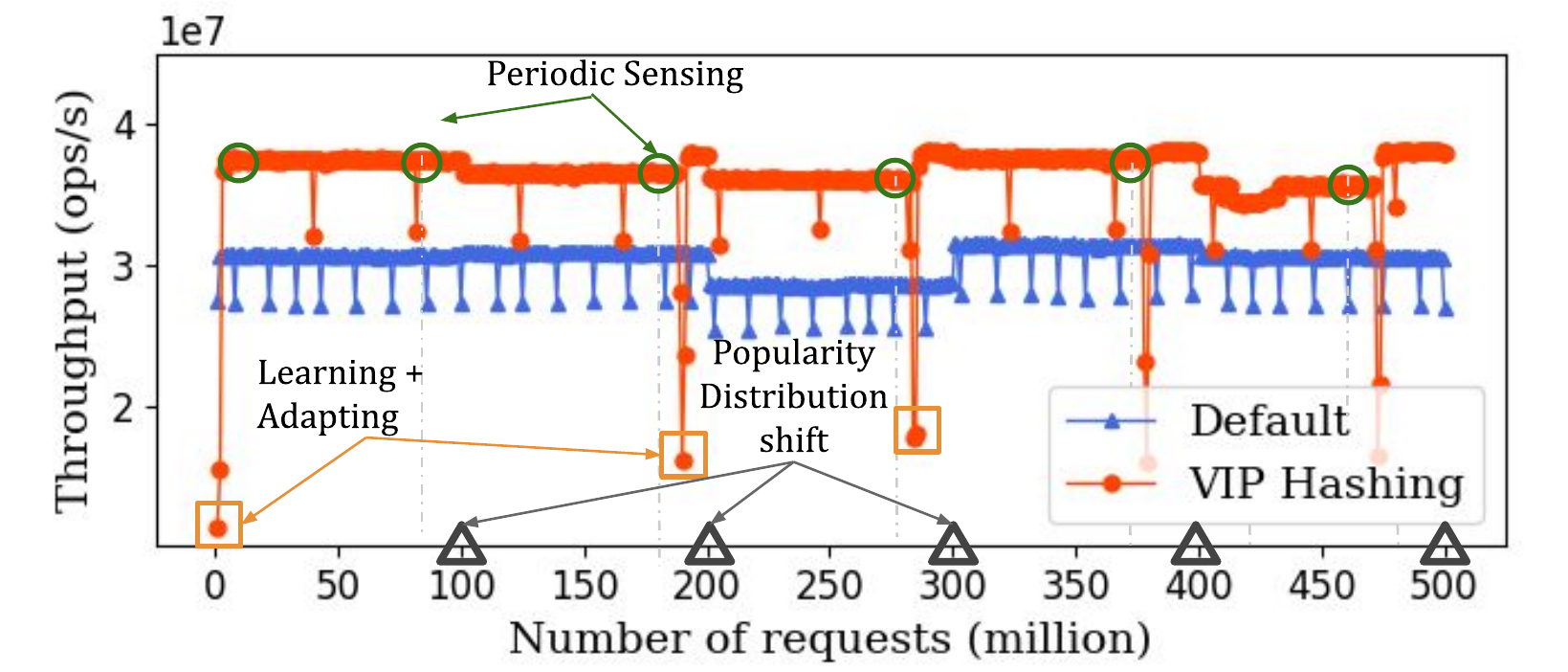}
		\label{fig:medium_churn_zipf1}
	}
	\hspace{4mm}
	\subfloat[\shepherd{Medium churn rate (\S\ref{sec:popularity_churn}) with $zipf=1.5$. Sensing triggers learning only 1 time, when it detects a significant difference in average displacement. Throughput increases by 49.0\% overall.}]{
		\includegraphics[scale=0.46]{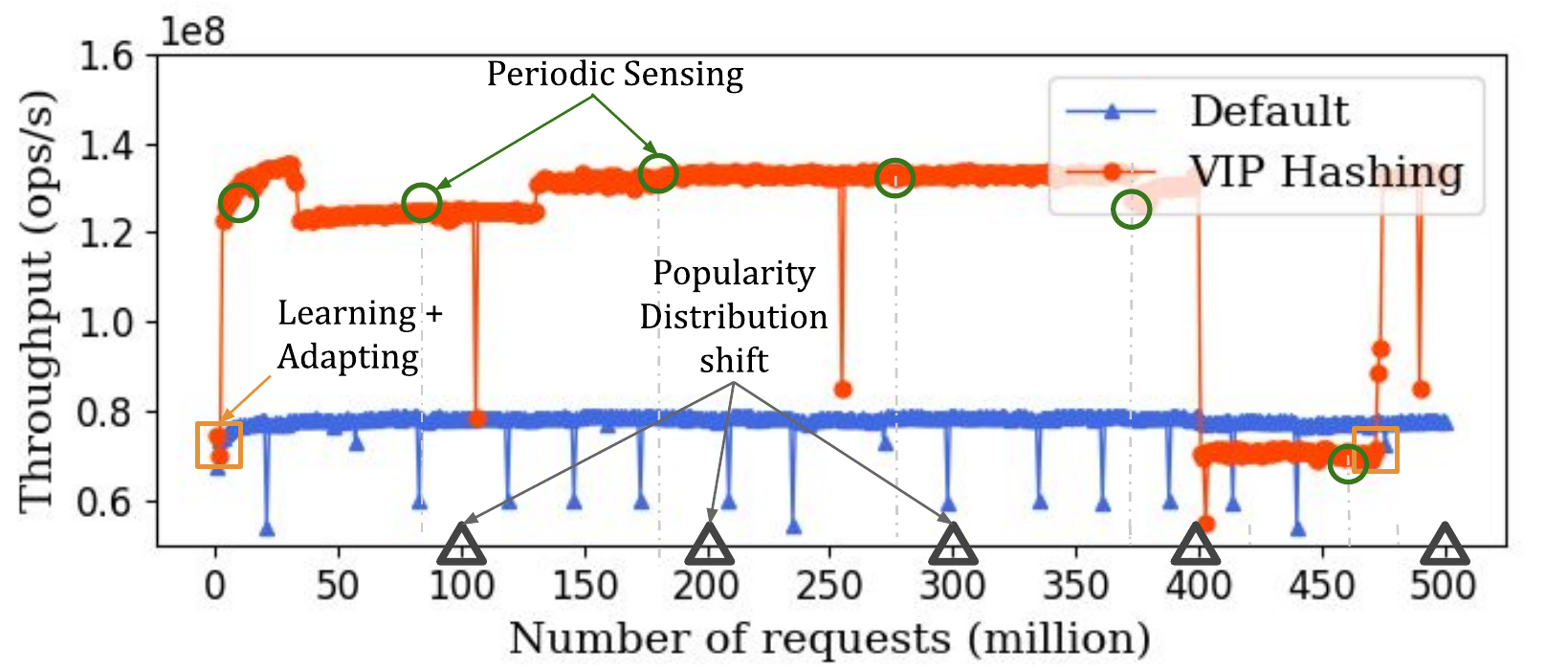}
		\label{fig:medium_churn_zipf1_5}
	}
%	\vspace{3mm}
	\caption{\textbf{\shepherd{Performance of VIP hashing under medium popularity churn at $zipf=1$ and $zipf=1.5$. Fetch requests are issued to a hash table with 1M entries at load factor 0.95 with keys in a random order initially. Popularity distribution shifts every 100M requests by 25\% (top 21 out of 1M keys at $zipf=1$, and the topmost key at $zipf=1.5$, are replaced by less popular key(s) at random). Distribution shift increases average displacement and can reduce performance (notice drop in performance of VIP hashing at 400M requests for $zipf=1.5$). Sensing triggers learning\textsuperscript{2} whenever it detects a significant increase in average displacement. Net increase in throughput for $zipf=1$ and $zipf=1.5$ is 19\% and 49\% respectively.}}}
	\label{fig:vip_results_medium_churn}
	\begin{minipage}{\textwidth}
		\vspace{6mm}
%		\hline % misuse!
		\vspace{2mm}
		\small\textsuperscript{2} \shepherd{The triggers of sense mode and learn+adapt mode have been marked using green circles and orange squares respectively. The unmarked periodic dips in throughput for both the VIP and default implementations are due to monitoring activity performed by the Cloudlab~\cite{cloudlab} environment, and are unrelated to VIP hashing.}
	\end{minipage}
\end{figure*}
\begin{figure*}
	\centering
	\subfloat[High churn rate (\S\ref{sec:popularity_churn}) with $zipf=1$. \revision{Overall, 11.8\% increase in throughput is observed.}]{
		\includegraphics[scale=0.46]{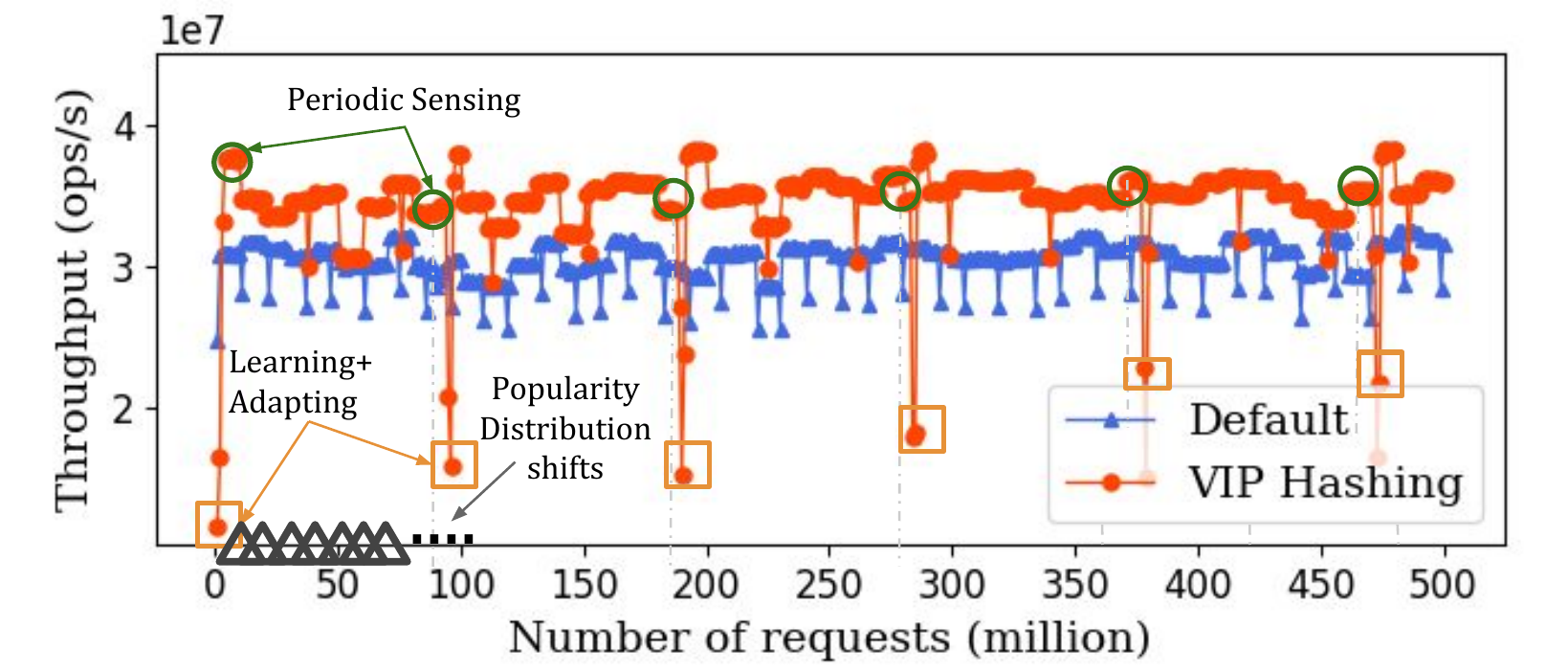}
		\label{fig:high_churn_zipf1}
	}
	\hspace{4mm}
	\subfloat[\shepherd{High churn rate (\S\ref{sec:popularity_churn}) with $zipf=1.5$. Overall, 21.9\% increase in throughput is observed.}]{
		\includegraphics[scale=0.46]{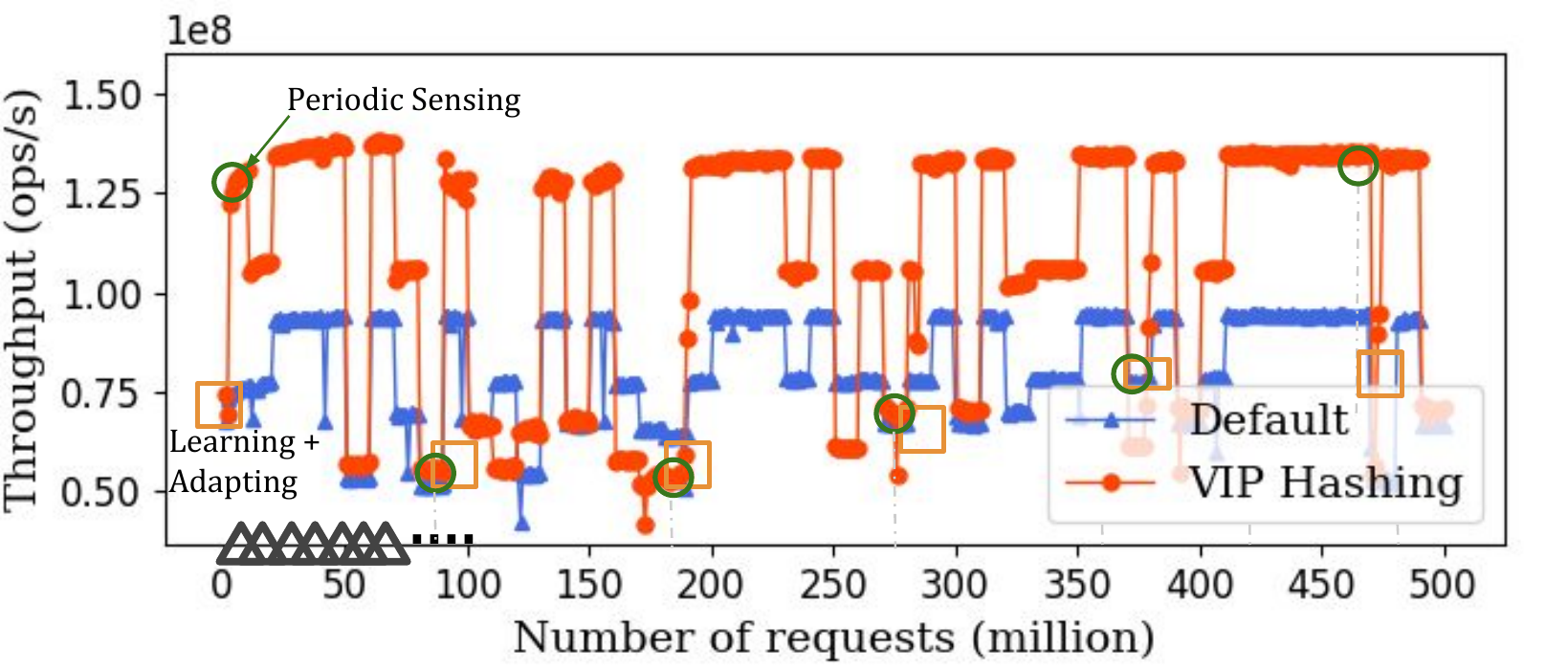}
		\label{fig:high_churn_zipf1_5}
	}
	%	\vspace{3mm}
	\caption{\textbf{\shepherd{Performance of VIP hahsing under high popularity churn at $zipf=1$ and $zipf=1.5$. Fetch requests are issued to a hash table with 1M entries at load factor 0.95 with keys in a random order initially. Popularity distribution shifts every 10M requests by 50\% (top 750 out of 1M keys at $zipf=1$, and the top 2 keys at $zipf=1.5$, are replaced by less popular keys at random). The benefit of learning dimishes as the popularity order becomes shuffled. Periodic sensing triggers learning every time, as frequent distribution shifts cause significant change in average displacement. Net increase in throughput for $zipf=1$ and $zipf=1.5$ is 12\% and 22\% respectively.}}}
	\label{fig:vip_results_high_churn}
\end{figure*}
\begin{figure*}
	\centering
	\subfloat[Steady state (\S\ref{sec:eval_steady_state}) with $zipf=1$.  \revision{An overall gain of 5.4\% is observed.}]{
		\includegraphics[scale=0.46]{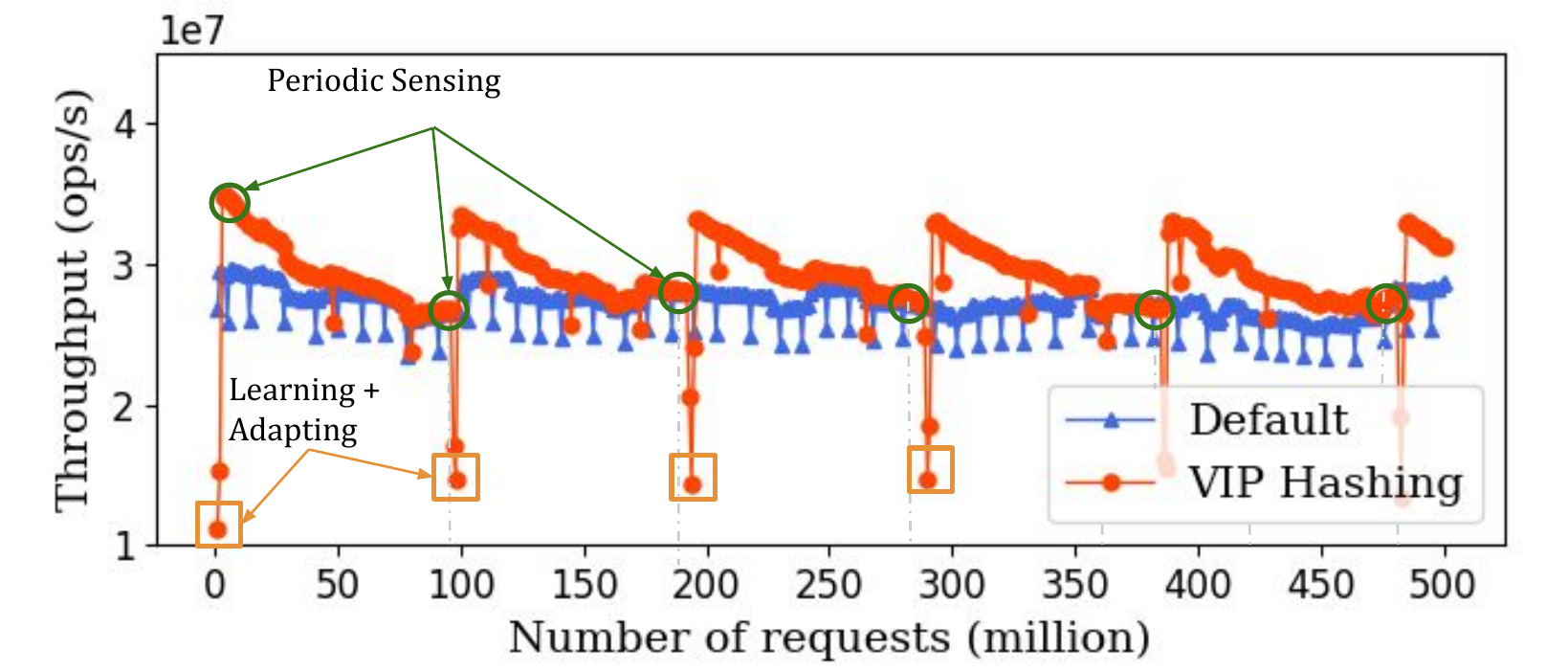}
		\label{fig:steady_state_zipf1}
	}
	\hspace{4mm}
	\subfloat[\shepherd{Steady state (\S\ref{sec:eval_steady_state}) with $zipf=1.5$.  An overall gain of 3.1\% is observed.}]{
		\includegraphics[scale=0.46]{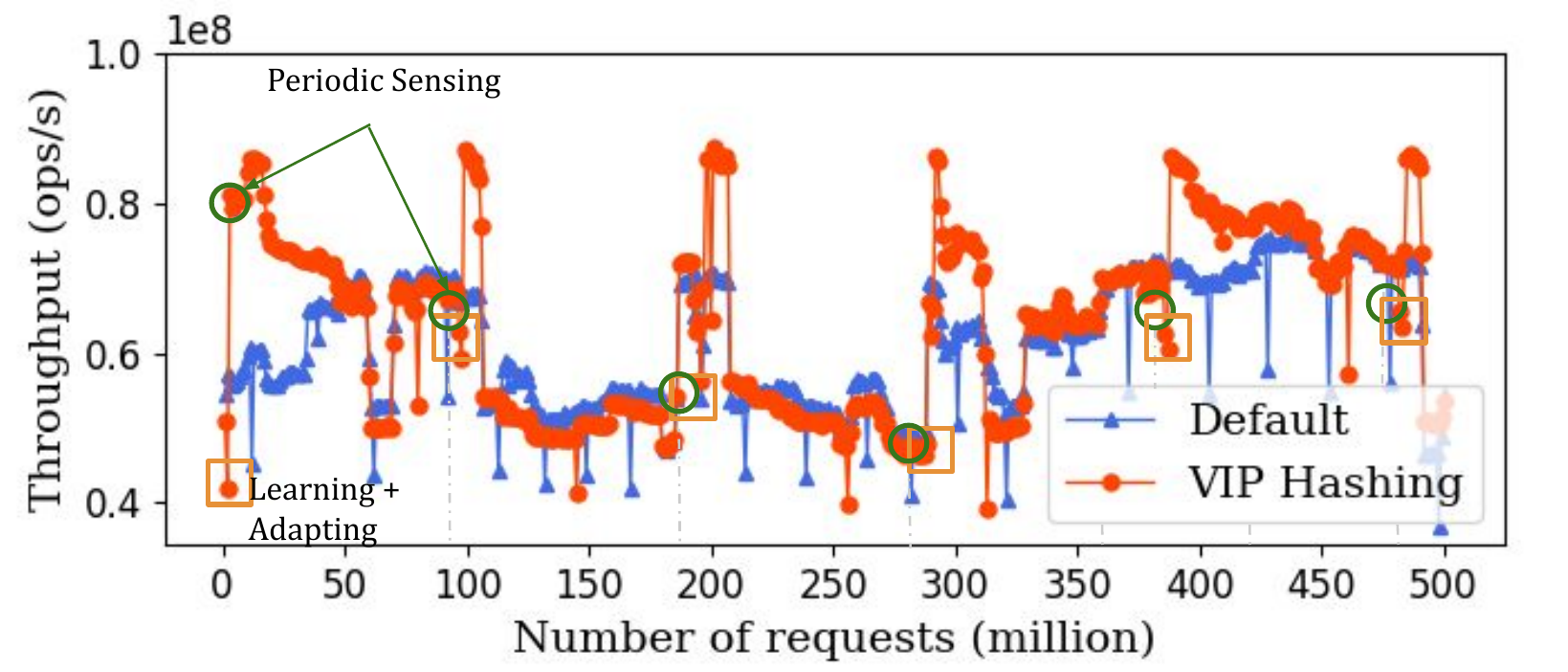}
		\label{fig:steady_state_zipf1_5}
	}
	\caption{\textbf{\shepherd{Performance of VIP hashing in a steady state consisting of 98\% fetch requests, 1\% insert requests, and 1\% delete requests at $zipf=1$ and $zipf=1.5$. 500M fetch requests are issued to a hash table with 1M entries at load factor 0.95 with keys in a random order initially. With new keys being inserted at the front of the buckets and existing keys being deleted, the hash table arrangement steadily becomes worse. Periodic sensing triggers learning every time which bounces back the performance. The net throughput gain for $zipf=1$ and $zipf=1.5$ is 5\% and 3\% respectively.}}}
	\label{fig:vip_results_steady_state}
\end{figure*}

\begin{figure*}
	\centering
	\subfloat[Ready mostly workload (\S\ref{sec:eval_read_mostly}) with $zipf=1$. \revision{Overall, we observe a gain of 1\% in throughput}.]{
		\includegraphics[scale=0.46]{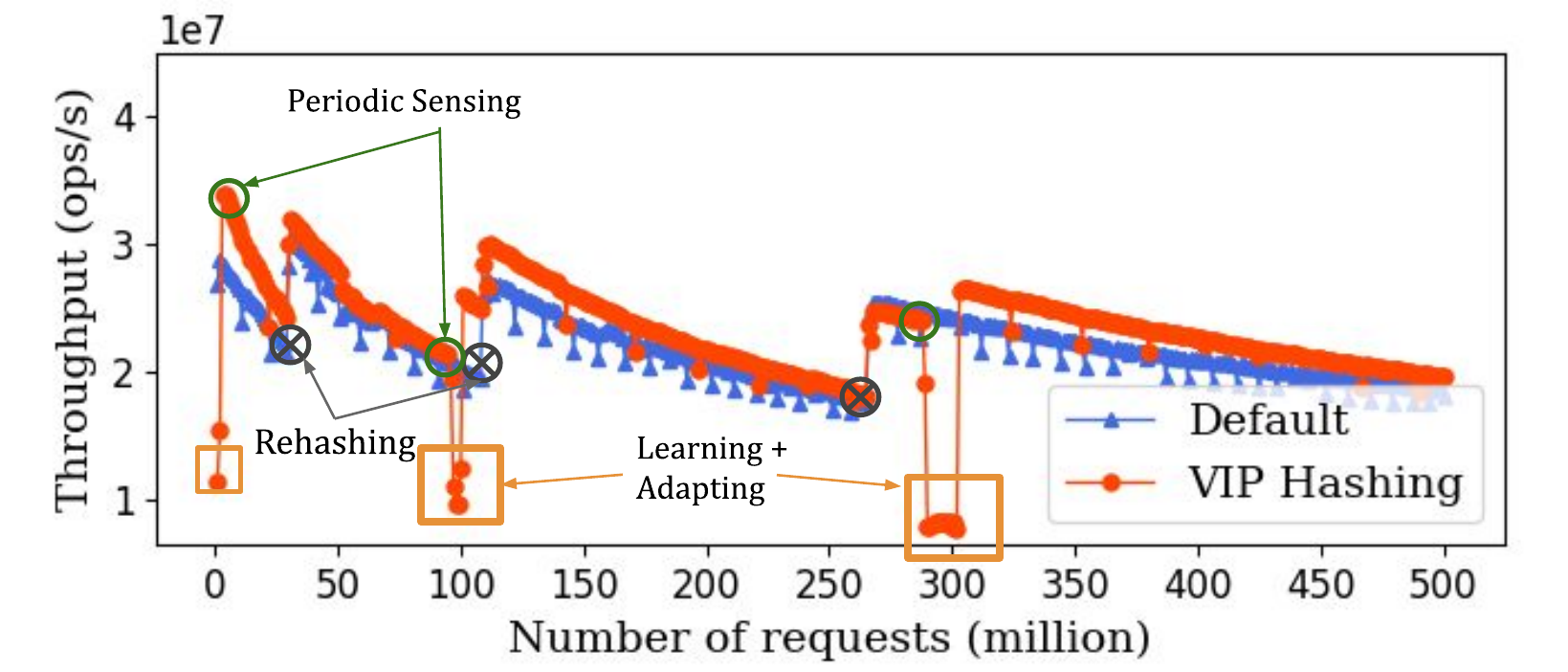}
		\label{fig:read_mostly_zipf1}
	}
	\hspace{4mm}
	\subfloat[\shepherd{Ready mostly workload (\S\ref{sec:eval_read_mostly}) with $zipf=1.5$. Overall, we observe a gain of 10.6\% in throughput}.]{
		\includegraphics[scale=0.46]{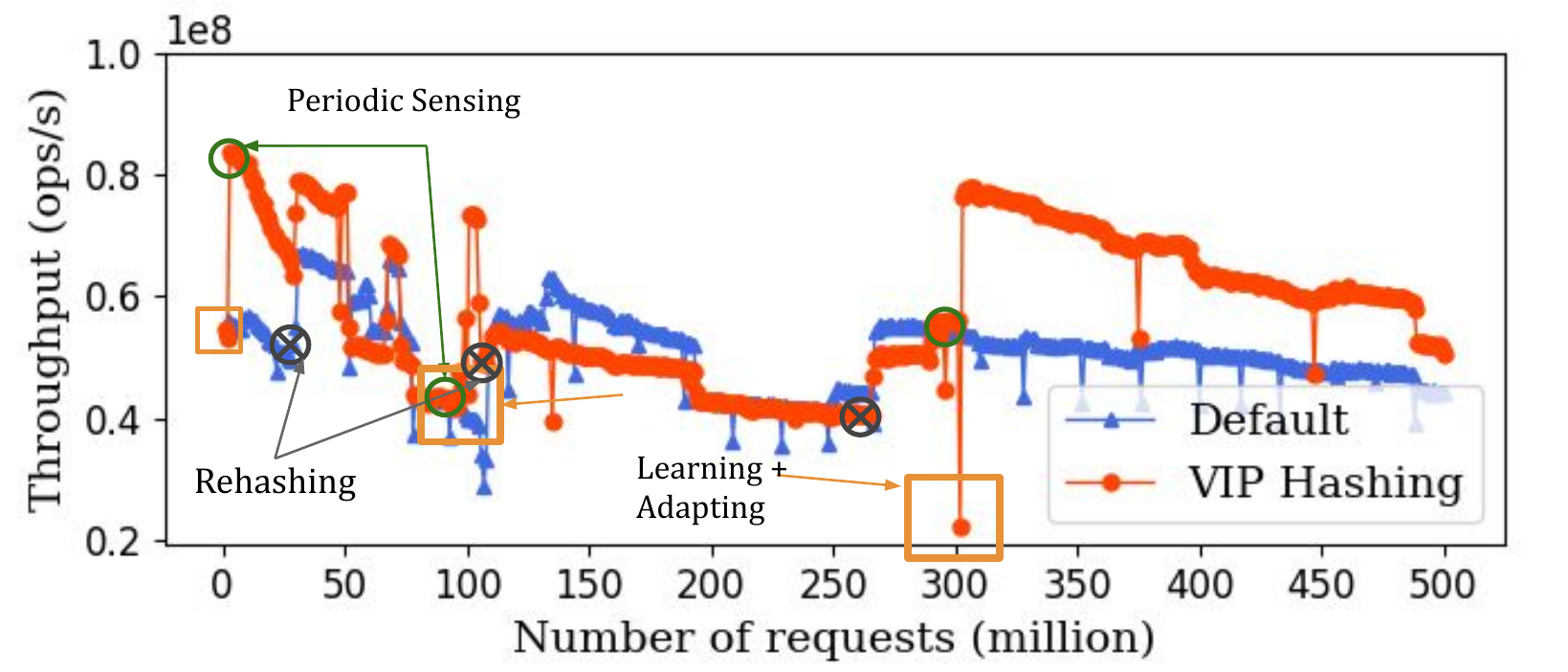}
		\label{fig:read_mostly_zipf1_5}
	}
	\caption{\textbf{\shepherd{Performance of VIP hashing under a ready mostly workload with 98\% fetch requests and 2\% insert requests at $zipf=1$ and $zipf=1.5$. 500M fetch requests are issued to a hash table with 1M entries at load factor 0.95 with keys in a random order initially. For this workload, rehashing is triggered when the load factor reaches 1.5, which happens every $75\cdot htsize$ requests. Whenever rehashing occurs, we double the periodicity of sensing (N\textsubscript{S}) and the duration of learning (N\textsubscript{L}), i.e., learning is triggered less frequently for longer duration each time. The net gain in throughput for $zipf=1$ and $zipf=1.5$ is 1\% and 11\% respectively.}}}
	\label{fig:vip_results_read_mostly}
\end{figure*}

\subsubsection{Popularity Churn}\label{sec:popularity_churn}

In this workload, we study how VIP hashing adapts to changing popularity distribution over time. We simulate two rates of shift \textendash\ medium and high. For the case of medium churn, the popularity distribution shifts by 25\% every 100M fetch operations (about 3s at $zipf=1$). Fig.~\ref{fig:vip_results_medium_churn} shows the behavior of VIP hashing under medium churn. Note that the sense mode triggers learning only when necessary. For instance, learning was triggered 3 out of the 4 times at $zipf=1$ only when there was a substantial change in average displacement due to shift in popularity (accompanied by a decrease in performance). \shepherd{The gain in throughput for $zipf=1$ and $zipf=1.5$ is 19\% and 49\% respectively.}

For the case of high churn (Fig.~\ref{fig:vip_results_high_churn}), the popularity distribution shifts by 50\% every 10M fetch operations ($<$ 1s), i.e., popularity shift occurs 50 times during the experiment. Every run of the sense mode detects a change in distribution and learning is triggered every time for both levels of skew. \shepherd{We obtain a net increase of 12\% and 22\% in throughput for  $zipf=1$ and $zipf=1.5$ respectively}. Thus, VIP hashing is able to sense changes in the distribution, and re-learn on the fly.

%\subsubsection{High Popularity Churn}\label{sec:eval_high_churn_rate}
%
%We simulate high rate of popularity shift in this workload \textendash\ the top 750 keys (cumulative probability 50\%) are randomly replaced by less popular keys every 10M requests ($<1$s). A total of 500M fetch requests are issued, and the popularity shifts 50 times during the experiment. Given that the shift in distribution is substantial, on every run of the sense mode a change in popularity is detected and learning is triggered. We obtain a net increase of \revision{12\%} in the throughput.

\subsubsection{Steady State}\label{sec:eval_steady_state}

Next, we test a workload with 98\% fetch requests, 1\% insert requests, and 1\% delete requests (Fig.~\ref{fig:vip_results_steady_state}). The cardinality of the hash table doesn't change substantially during the experiment, as the number of insert and delete operations are approximately balanced. The keys are inserted (deleted) in random positions of the popularity order. We observe that as new keys (which are less popular with high probability) are inserted at the front of the chains, the hash table arrangement steadily becomes worse and the performance of VIP hashing approaches the default for $zipf=1$. \shepherd{At $zipf=1.5$, the trend is similar, but is less stable as a small number of topmost keys carry most of the popularity weight}. A change in average displacement is sensed every time and learning is triggered, which bounces back the performance of VIP hashing. \shepherd{We obtain a net gain in throughput of 5.4\% and 3\% for $zipf=1$ and $zipf=1.5$ respectively}.

\subsubsection{Read Mostly}\label{sec:eval_read_mostly}

In this workload, we issue 98\% fetch requests and 2\% insert requests. New keys are inserted in arbitrary positions in the popularity order. Similar to \S\ref{sec:eval_steady_state}, we observe that the performance steadily becomes worse as new keys are inserted at the front of the bucket chains for $zipf=1$ (Fig.~\ref{fig:read_mostly_zipf1}). Inserting new keys increase the load factor, which degrades the throughput of the default implementation as well (Fig.~\ref{fig:read_mostly_zipf1}). \shepherd{The rate of degradation at $zipf=1.5$ as the popularity weight lies with a smaller portion of topmostly keys.} Rehashing is triggered when the load factor exceeds \revision{$1.5$} \revision{(happens every $75\cdot htsize$ requests)}, which bounces back the performance for both the default and VIP hashing implementations for both levels of skew. The periodicity at which sensing is triggered \revision{(every $90\cdot htsize$ requests)} increases every time rehashing is performed, as we update the parameters N\textsubscript{S} and N\textsubscript{L} according to the size of the hash table ($htsize$). Given that the change in the distribution is substantial, every run of the sense mode detects a change in popularity and triggers learning. \shepherd{The net gain in throughput for $zipf=1$ and $zipf=1.5$ is 1\% and 11\% respectively.}

% However, since the configuration of the hash table is changing at a faster rate (rehashing is performed every $50\cdot htsize$ requests) than we run the learn+adapt mode (every $60\cdot htsize$ requests), the gain from learning is unable to offset the overhead. Overall, we observe a net loss of 1.7\% in throughput, which is under the budget of 5\% that we set out with (\S\ref{sec:vip_hashing_parameters})

%\vspace{8pt}
%Overall, these experiments show the robustness of the VIP hashing method given the choice of the parameters. We obtain a gain ranging from 6-19\% depending on the workload, and the loss in throughput observed is 2.5\% in the worst case because of our choice of parameters.

	\section{Related Work}\label{sec:related_work}
	Hash tables are well studied data structures in literature. Two major categories of hash tables are chained hashing~\cite{hashtable} where collisions are resolved by chaining (\S\ref{sec:background_ht}), and open addressing~\cite{openaddressing} where collisions are resolved by searching for alternate positions in an array. Richter et al.~\cite{richter} study different hash table implementations spanning both the categories, hash functions, workload patterns, etc. while highlighting the variability in the performance of hash tables based on a host of factors. Similar to our work, they consider the problem of hashing 8-byte integer keys and values.

Multiple open source hash tables~\cite{inteltbb,bytell,flathash} use both categories of implementations. For instance, Google's flat hash table~\cite{flathash} takes an open addressing approach, while the bytell (byte linked list) hash table~\cite{bytell} uses chaining to resolve collisions. When it comes to data systems, DBMS such as SQLite3~\cite{sqlite} and PostgreSQL~\cite{postgres}, as well as key-value stores such as Redis~\cite{redisBlog} and Memcached~\cite{memcached} use data structures that involve chaining of entries. Thus, we find
that chained hash tables are a popular choice commonly
used in practice.

%But for most datsa systems with variable sized elements, chained hashing is used (cite sqlite, postgres, memcached, redis).

Skew in popularity is a well studied phenomenon. Multiple studies involving production workloads have found fetch requests to follow a power-law behavior~\cite{fbstudy,zipf2}, \revision{which is often captured using the zipfian distribution~\cite{spyros,anna,ycsb}. For instance, the request distribution in the core workloads of YCSB~\cite{ycsbCoreWorkloads} is zipfian by default. Alongside skew in popularity, previous work~\cite{fbstudy} also discusses effects such as churn in popular keys in real world workloads. This is a key feature captured by Wiscer (\S\ref{sec:wiscer}), which is not present in any of the existing workload generators to the best of our knowledge.}

%\revision{, which has been modeled using zipfian distribution~\cite{zipf} in multiple studies~\cite{spyros,anna}. The core workloads of YCSB~\cite{ycsb}, a popular workload generator, follow zipfian request distribution. Previous work~\cite{fbstudy} discusses effects such as churn in popularity of keys in real world workloads, but none of the existing workload generators capture this behavior to the best of our knowledge. Simulating shifts in popularity distribution is a key feature of Wiscer (\S\ref{sec:wiscer})}.

Broadly speaking, caching algorithms such as LRU-k~\cite{lruk}, MRU~\cite{mru}, etc. that track the recency of access are attempting to capture the current popularity distribution. Key-value stores designed for disk-based settings, such as Anna~\cite{anna} and Faster~\cite{faster} incorporate techniques to leverage the skew in popularity by moving hot data to memory. Recent work by Herodotou et al.~\cite{herodotus} uses machine learning to automatically move data between different storage tiers in clusters. A recurring trend to note here is that the complexity of these existing schemes vary depending on the ``budget'' allowed by the setting, ranging from relatively simple LRU approach is used even in processor caches, to a more complex approach involving machine learning in large-scale clusters.

\revision{To this end, the budget available for learning in-the-loop with hash tables is extremely limited, as we see in our work (Fig.~\ref{fig:adding_counter}). In the seminal paper on learned indexes by Kraska et al.~\cite{kraska}, the authors propose learning a hash function from the keys in the hash table such that collisions can be avoided altogether. However, recent work on learned hash functions~\cite{learned_hashmap_recent} shows that this approach hits the wall due to two main reasons \textendash\ cache sensitivity, and model complexity. While larger models are necessary to accurately capture arbitrary key distributions, the computation times become prohibitively high (50\textit{x} higher~\cite{learned_hashmap_recent}) due to increased cache misses from accessing the model parameters. The high cache sensitivity and low latency requirements of hash tables preclude the use of costly ML techniques for learning.}

A noteworthy aspect of the VIP hashing method is that learning is performed online, i.e., the hash table does not pause operation at any time. In contrast, recent work~\cite{learned_hashmap_recent,entropylearnedhashing} involves learning from the data offline before populating the hash table. Adapting to changing key distributions remains a challenge with these approaches, with the fallback being reverting to the default hash table implementation~\cite{entropylearnedhashing} or relearning~\cite{learned_hashmap_recent,kraska}, both of which require costly rehashing that pauses execution.

%A drawback of this proposition is the intolerance to insert operations \textendash\ as new keys are inserted, one might need to re-learn the hash function while remaining consistent with the older version, or perform a rehash. Moreover, using techniques like neural networks can be very costly given the low latency setting imposed by hash tables. To our knowledge, our work is the first attempt to leverage the skew in popularity to improve the performance of hash tables, by learning in-the-loop in a lightweight manner.

%\revision{Recent work highlights challenges of using ML models as hash functions, and suffer from high latency and cache misses, while training times can be in the order of minutes.}

%\revision{When it comes to workload generation tools, YCSB provides options to generate skewed data with zipfian being the default. However, a key feature that Wiscer implements - shifting popularity distribution over time - is not captured by any other workload generation tool to the best of our knowledge.}
	
	\section{Conclusions \& Future Work}\label{sec:conclusion}
	Hashing is a low-latency operation that runs a tight loop of operations, and is sensitive to the effects of caching. Increasing the memory and computation footprint even by a small proportion can have a significant impact on performance as we see in \S\ref{sec:learning_is_costly}. Given these constraints, learning in-the-loop precludes the use of costly techniques and makes it necessary to use lightweight schemes while controlling the overhead as much as possible.

Overall, VIP hashing is comprised of four mechanisms \textendash\ \textit{learning}, \textit{adapting}, \textit{sensing}, and \textit{dynamically switching-on/off} learning. These mechanisms (\S\ref{sec:vip_hashing}), along with our choice of parameters (\S\ref{sec:vip_hashing_parameters}) keep the overhead of learning in check compared to the gains. \revision{We evaluate VIP hashing using an extensive set of workloads (Fig.~\ref{fig:eval_hashjoin}-\ref{fig:vip_results_read_mostly}) that demonstrate the ability to learn on the fly in the presence of insert and delete operations, and shifting distributions.} \shepherd{Our experiments involving PK-FK hash joins show that VIP hashing reduces the end-to-end execution time by 22\%, while the gain in performance for point queries ranges from 3\%-77\% under medium skew. While the performance gain depends on the a host of factors (level of skew, proportion of insert and delete, etc.), the distinguishing property of VIP hashing is the ability to learn in a non-blocking, online fashion.}

Broadly speaking, our work highlights the challenges of learning with cache sensitive, low latency data structures. \revision{While the major source of performance gain for VIP hashing has been from improvement in cache locality, the sensitivity of hash tables to effects of caching make learning very challenging (\S\ref{sec:learning_is_costly},~\cite{learned_hashmap_recent})}. Possible future work could involve studying other low latency data structures such as bloom filters~\cite{bloomfilter}, to see how cache locality can be improved by adapting to the data. \revision{Learning tasks involving such cache sensitive data structures will necessitate controlling the overhead, perhaps by using our approach of budgeted learning and non-intrusive sensing.}

%our approach of budgeting the cost of learning along with using lightweight non-intrusive sensing mechanisms to control the overhead.

% The reason VIP hashing performs better is due to improvement in cache locality.

%You can implement it in real systems. Our approach of lightweight sensing and running only when required can be applied to different parts of the database system. Maybe look at other low latency data structure such as bloom filters and bitmaps. Same approach of sensing and running only when required will need to be applied.
	
	\section{Acknowledgments}
	This research was supported in part by a grant from the Microsoft Jim Gray Systems Lab, by the National Science Foundation under grant OAC-1835446, and by CRISP, one of six centers in JUMP, a Semiconductor Research Corporation (SRC) program, sponsored by MARCO and DARPA.
	
	% The following two commands are all you need in the
	% initial runs of your .tex file to
	% produce the bibliography for the citations in your paper.
	\bibliographystyle{abbrv}
	\bibliography{VIPHashing}
	\begin{appendix}
		\section{Proof of Theorem 1}\label{sec:appendix_a}
		Theorem 1 (\S\ref{sec:vip_hashing_learning_adapting}) states that given keys $K_1,\ K_2,\ $
$..., K_n$ in a bucket with probability $p_1>p_2>..>p_n$, such that the keys are in a random order initially. Then by applying Algorithm~\ref{alg:learning}, the keys will converge to the sorted order of popularity as the number of fetch requests $N\rightarrow \infty$. \revision{We first make the following observation:}
\revision{\begin{lemma}
	Given two keys $K_1$ and $K_2$ with popularity $p$ and $(1-p)$ respectively. Let $p>0.5$. Given $N$ successful fetch requests are made, and keys $K_1$ and $K_2$ receive $N_1$ and $N_2$ requests respectively. Then,
	\begin{displaymath}
		\lim\limits_{N\rightarrow\infty}\frac{N_1-N_2}{N} = (2\cdot p - 1) > 0
	\end{displaymath}
%	i.e., the probability that $K_2$ receives more requests than $K_1$ is zero as $N\rightarrow 0$.
\end{lemma}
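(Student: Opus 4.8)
The plan is to recognize this claim as an immediate instance of the law of large numbers. Model the $i$-th of the $N$ successful fetch requests by an indicator $X_i$ that equals $1$ if the request is for $K_1$ and $0$ if it is for $K_2$. Since fetch requests are i.i.d. draws from the popularity distribution (\S\ref{sec:background_prob}) and the two popularities already sum to one ($p + (1-p) = 1$), each request is for $K_1$ with probability $p$ and for $K_2$ with probability $1-p$; hence the $X_i$ are i.i.d. Bernoulli($p$) and $N_1 = \sum_{i=1}^{N} X_i \sim \mathrm{Binomial}(N,p)$, while $N_2 = N - N_1$. It is worth stating explicitly that Algorithm~\ref{alg:learning} only permutes the positions of the keys within the chain and never alters which key a request targets, so the counts $N_1$ and $N_2$ are unaffected by the reorderings performed as the requests arrive.

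First I would rewrite the quantity of interest entirely in terms of $N_1$: using $N_2 = N - N_1$,
\[
\frac{N_1 - N_2}{N} \;=\; \frac{2N_1 - N}{N} \;=\; 2\cdot\frac{N_1}{N} - 1 .
\]
Next I would apply the strong law of large numbers to the i.i.d. sequence $\{X_i\}$, which gives $N_1/N \to E[X_1] = p$ almost surely as $N \to \infty$. Substituting this limit yields $(N_1 - N_2)/N \to 2p - 1$, and the hypothesis $p > 0.5$ gives $2p - 1 > 0$, which is exactly the assertion.

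If a quantitative statement is preferred over an almost-sure limit, the identical conclusion follows from Chebyshev's inequality together with $\mathrm{Var}(N_1/N) = p(1-p)/N \to 0$, giving convergence in probability with an explicit $O(1/\sqrt{N})$ control on the fluctuations, which also meshes with the Gaussian tail-bound reasoning of \S\ref{sec:background_prob}. There is no genuinely hard step here; the only points to handle carefully are that ``$K_i$ receives $N_i$ requests'' refers to the count over successful fetches and is independent of the chain's evolving order, and that the Bernoulli parameter is exactly $p$ precisely because the two popularities are normalized. The substantive work of Theorem~1 comes after this lemma, when it is combined with the swap rule of Algorithm~\ref{alg:learning} to show that $K_1$ eventually and permanently moves ahead of $K_2$ in the chain, and then, inductively, that the whole chain converges to the popularity-sorted order.
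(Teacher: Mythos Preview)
Your proposal is correct and takes essentially the same approach as the paper: the paper does not give a detailed argument at all, stating only that ``the above lemma follows from the frequentist definition of probability,'' which is exactly the law-of-large-numbers reasoning you spell out. Your write-up is in fact more careful than the paper's, since you make the Bernoulli/i.i.d.\ modeling explicit, note that Algorithm~\ref{alg:learning}'s swaps do not affect the counts $N_1,N_2$, and offer a quantitative Chebyshev variant; none of this departs from the paper's intended one-line justification.
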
}
%\newcommand*{\Comb}[2]{{}^{#1}C_{#2}}%

%\begin{proof}
%	\begin{align*}
%		P(N_2>N_1)&\leq P\Big(N_2\geq\frac{N}{2}\Big)&&\\
%%		&\leq P\Big(N_2\geq\frac{N}{2}\Big)&&\\
%		&=\sum\limits_{i=\frac{N}{2}}^{N}\Comb{N}{i}\cdot(1-p)^i\cdot p^{(N-i)}&&\\
%		&=p^N\sum\limits_{i=\frac{N}{2}}^{N}\Comb{N}{i}\bigg(\frac{1-p}{p}\bigg)^i&&\\
%		&\leq p^N\cdot\bigg(\frac{1-p}{p}\bigg)^{\frac{N}{2}}\cdot 2^{N}&&\hspace{-8pt}(\text{since}\ \sfrac{1-p}{p}<1)\\
%		&=(4\cdot p\cdot (1-p))^{\frac{N}{2}}\rightarrow 0&&\hspace{-8pt}(p\cdot(1-p) < 0.25)
%	\end{align*}
%\end{proof}
\begin{table*}[]
	\begin{tabular}{|c|c|c|l|}
		\hline
		\begin{tabular}[c]{@{}c@{}}\textbf{Dimension}\\ \textit{(Primary Key)}\end{tabular}             & \begin{tabular}[c]{@{}c@{}}\textbf{Fact}\\ \textit{(Foreign Key)}\end{tabular}                & \begin{tabular}[c]{@{}c@{}}\textbf{Skew added}\\ \textbf{to FK?}\end{tabular} & \multicolumn{1}{c|}{\textbf{Comments}} \\ \hline
		
		\begin{tabular}[c]{@{}c@{}}Part\\ \textit{(p\_partkey)}\end{tabular}                   & \begin{tabular}[c]{@{}c@{}}PartSupp\\ \textit{(ps\_partkey)}\end{tabular}        & No                                                          & \begin{tabular}[c]{@{}l@{}}Each part has a fixed number of suppliers (4 suppliers\\per part). Thus, \textit{ps\_partkey} cannot be skewed.\end{tabular} \\ \hline
		
		\begin{tabular}[c]{@{}c@{}}Supplier\\ (s\_suppkey)\end{tabular}               & \begin{tabular}[c]{@{}c@{}}PartSupp\\ (ps\_suppkey)\end{tabular}        & Yes                                                         & \begin{tabular}[c]{@{}l@{}}\textit{ps\_suppkey} is zipfian distributed, i.e., a supplier is chosen\\from a zipfian distribution over \textit{s\_suppkey}. We ensure\\that each part has 4 distinct suppliers.\end{tabular} \\ \hline
		
		\begin{tabular}[c]{@{}c@{}}PartSupp\\ \textit{(ps\_partkey, ps\_suppkey)}\end{tabular} & \begin{tabular}[c]{@{}c@{}}Lineitem\\ \textit{(l\_partkey, l\_suppkey)}\end{tabular} & Yes                                                         & \begin{tabular}[c]{@{}l@{}}\textit{l\_partkey} is zipfian distributed. \textit{l\_suppkey} is picked\\randomly from the available suppliers of the chosen part.\end{tabular} \\ \hline
		
		\begin{tabular}[c]{@{}c@{}}Customer\\ \textit{(c\_custkey)}\end{tabular}               & \begin{tabular}[c]{@{}c@{}}Orders\\ \textit{(o\_custkey)}\end{tabular}               & Yes                                                         & \begin{tabular}[c]{@{}l@{}}\textit{o\_custkey} is zipfian distributed, i.e., \textit{o\_custkey} is\\drawn from a zipfian distribution over all \textit{c\_custkey}.\end{tabular} \\ \hline
		
		\begin{tabular}[c]{@{}c@{}}Orders\\ \textit{(o\_orderkey)}\end{tabular}                & \begin{tabular}[c]{@{}c@{}}Lineitem\\ \textit{(l\_orderkey)}\end{tabular}            & No                                                          & \begin{tabular}[c]{@{}l@{}}Each order \textit{(o\_orderkey)} can have limited number of\\lineitems (1 to 7). Thus, \textit{l\_orderkey} cannot be skewed.\end{tabular} \\ \hline
		
		\begin{tabular}[c]{@{}c@{}}Nation\\ \textit{(n\_nationkey)}\end{tabular}                & \begin{tabular}[c]{@{}c@{}}Supplier\\ \textit{(s\_nationkey)}\end{tabular}            & Yes                                                          & \begin{tabular}[c]{@{}l@{}}\textit{s\_nationkey} is zipfian distributed, i.e., \textit{s\_nationkey} is\\drawn from a zipfian distribution over all \textit{n\_nationkey}.\end{tabular} \\ \hline
		
		\begin{tabular}[c]{@{}c@{}}Nation\\ \textit{(n\_nationkey)}\end{tabular}                & \begin{tabular}[c]{@{}c@{}}Customer\\ \textit{(c\_nationkey)}\end{tabular}            & Yes                                                          & \begin{tabular}[c]{@{}l@{}}\textit{c\_nationkey} is zipfian distributed, i.e., \textit{c\_nationkey} is\\drawn from a zipfian distribution over all \textit{n\_nationkey}.\end{tabular} \\ \hline
		
		\begin{tabular}[c]{@{}c@{}}Region\\ \textit{(r\_regionkey)}\end{tabular}                & \begin{tabular}[c]{@{}c@{}}Nation\\ \textit{(n\_nationkey)}\end{tabular}            & No                                                          & \begin{tabular}[c]{@{}l@{}}Each nation belongs to a fixed region (continent).\end{tabular} \\ \hline
	\end{tabular}
	\caption{\textbf{\shepherd{Introducing skew in TPC-H relations. We introduce skew in the FK attribute wherever possible under existing constraints. For 5 out of 8 cases where skew was introduced, the level of skew can be configured through the zipfian coefficient.}}}\label{tab:tpch_skew}
\end{table*}

\revision{The above lemma follows from the frequentist definition of probability}. Thus, as $N\rightarrow\infty$, we can be sure that more popular keys will receive more requests. This will hold pairwise for all the keys $K_1,\ K_2,\ $
$..., K_n$ in the bucket chain, which motivates the following claim.
\begin{lemma}
	Let $\{K_i\}$ be keys in a bucket with probability $\{p_i\},\ i\in[N]$. Let $K_1$ be the most popular key in the bucket, i.e., $p_1>p_j\ \forall j\in\{2,..,N\}$. Let the initial order of keys be random. Then, by running Algorithm~1, $K_1$ will be at the front of the chain as number of fetch requests $N\rightarrow\infty$.
\end{lemma}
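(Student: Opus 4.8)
The plan is to track the position of $K_1$ in the chain over time and show that it is eventually non-increasing and must reach the front. Throughout, write $m$ for the number of keys in the bucket and $N$ for the number of successful fetch requests issued so far, and let $N_j$ denote the number of those requests that landed on $K_j$. The first ingredient is the pairwise lemma above together with the strong law of large numbers: $N_j/N \to p_j$ almost surely, so for every $j\neq 1$,
\[
N_1-N_j = N\Big(\tfrac{N_1}{N}-\tfrac{N_j}{N}\Big)\longrightarrow +\infty ,
\]
since $p_1>p_j$. Intersecting these finitely many almost-sure events, there is an almost surely finite (random) time $T_0$ after which the stored counts satisfy $N_1 > N_j$ — equivalently $N_1\ge N_j+1$, since the counters in $req\_cnt\_ht$ are integer valued and only increase — simultaneously for all $j\neq 1$.

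The second step is to read off from Algorithm~\ref{alg:learning} the only two ways the position of $K_1$ can change. $K_1$ moves \emph{backward} only when some key $K_j$ lying behind it is fetched, $K_1$ is the minimum-count entry recorded on the traversed prefix, and the swap test fires, i.e.\ the post-increment count $N_j+1$ strictly exceeds $K_1$'s count $N_1$; for $t\ge T_0$ this is impossible because $N_j+1\le N_1$. Hence the position of $K_1$ is non-increasing after $T_0$. Conversely, if $K_1$ is fetched while sitting at some position $k>1$, the swap test compares $K_1$'s incremented count $N_1+1$ with the minimum count among positions $1,\dots,k-1$, and after $T_0$ this is automatically favorable (since $N_1+1>N_1$ already exceeds every other stored count), so $K_1$ is swapped with that minimum-count entry and its position strictly decreases by at least one.

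The final step is a monotonicity-plus-recurrence argument. Each request independently lands on $K_1$ with probability $p_1>0$, so $K_1$ is fetched infinitely often almost surely. Combined with the previous paragraph, after $T_0$ every fetch of $K_1$ while it is not at the front strictly decreases its (positive integer) position, so within at most $m-1$ further fetches of $K_1$ it reaches position $1$; once there it never moves, because the swap test degenerates to ``$N_1+1>N_1+1$'', which is false. Therefore $K_1$ occupies the front of the chain for all sufficiently large $N$, almost surely. (Freezing $K_1$ at the front and iterating the argument on the remaining $m-1$ keys is how one would bootstrap to the full sorted order asserted by Theorem~1.)

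I expect the main obstacle to be the careful bookkeeping of the backward-move case: one must verify from the pseudocode that $K_1$ can be displaced toward the tail only in its role as the tracked minimum-count entry on a \emph{later} key's probe path, and that the swap comparison pits the pre-increment count of that minimum entry against the post-increment count of the fetched key — this is precisely where the integrality gap $N_1\ge N_j+1$ (rather than merely $N_1>N_j$) is used. A minor clean-up is that the statement overloads $N$ for both the number of keys and the number of requests; disambiguating them (as above) keeps the proof unambiguous.
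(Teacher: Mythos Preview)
Your proof is correct and more carefully argued than the paper's. The paper proceeds by contradiction: it supposes $K_1$ sits at displacement $d>1$, invokes the pairwise lemma to say $K_1$'s request count eventually dominates every key ahead of it, and then observes that on the last fetch of $K_1$ it \emph{should} have been swapped forward---contradiction. That argument is terse and leaves the backward-displacement possibility implicit; it also never pins down what ``$K_1$ is at displacement $d>1$ as $N\to\infty$'' means formally. Your direct monotonicity-plus-recurrence decomposition makes these points explicit: you isolate a finite $T_0$ after which the integer-count inequality $N_1\ge N_j+1$ blocks any backward swap of $K_1$, then use infinite recurrence of fetches to $K_1$ to drive its position down to~$1$ in at most $m-1$ further hits, and finally verify absorption at the front. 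Both arguments rest on the same pairwise count-domination lemma, but yours handles the dynamics cleanly (in particular the swap test's pre-/post-increment asymmetry, which the paper's sketch does not isolate) at the cost of a little more length.
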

\begin{proof}
	Suppose $K_1$ is at displacement $d > 1$. Let there be keys $K'_1,\ ..,\ K'_{d-1}$ in front of $K_1$. Let the keys have received requests $n_1,\ ..,\ n_{d-1}$. Let $K_1$ have received $n$ requests. From Lemma 2, we know that
	\begin{displaymath}
		\lim\limits_{N\rightarrow\infty}n > n_i,\ \forall\ i\in[(d-1)]
	\end{displaymath}
	Thus, $K_1$ would have received more requests than all the keys in front of it as $N\rightarrow\infty$. From Algorithm~1, on the last request that $K_1$ received, it should have been swapped with a key with lower number of requests ahead of it. This contradicts our assumption that $K_1$ is at position $d>1$.
\end{proof}

Thus, the most popular key in the chain will be in the front as number of requests approaches infinity. By recursively applying Lemma 3 to the remaining keys in the bucket, we can prove that the keys will be in the sorted order of popularity as $N\rightarrow\infty$.
		\section{Introducing Skew in TPC-H}\label{sec:appendix_b}
		\shepherd{Fig.~\ref{fig:tpch_schema} shows the PK-FK (primary key-foreign key) constraints in TPC-H schema. Skew can arise in PK-FK relations when a some primary keys occur more frequently than others in the fact (FK) relation, i.e., the distribution of the FK attribute is skewed. Note that primary keys are unique, and thus by definition, skew cannot arise in the PK attribute. We considered the existing constraints in TPC-H schema (Fig.~\ref{fig:tpch_schema}), and introduced skew in the FK attribute wherever possible. Table~\ref{tab:tpch_skew} details our findings \textendash\ we have introduced skew in 5 out of 8 FK attributes, and we also describe the reasons for cases where skew could not be introduced. Wherever applicable, the level of skew can be adjusted by configuring the zipfian coefficient.}

\begin{figure}
	\centering
	\vspace{-5mm}
	\includegraphics[scale=0.36]{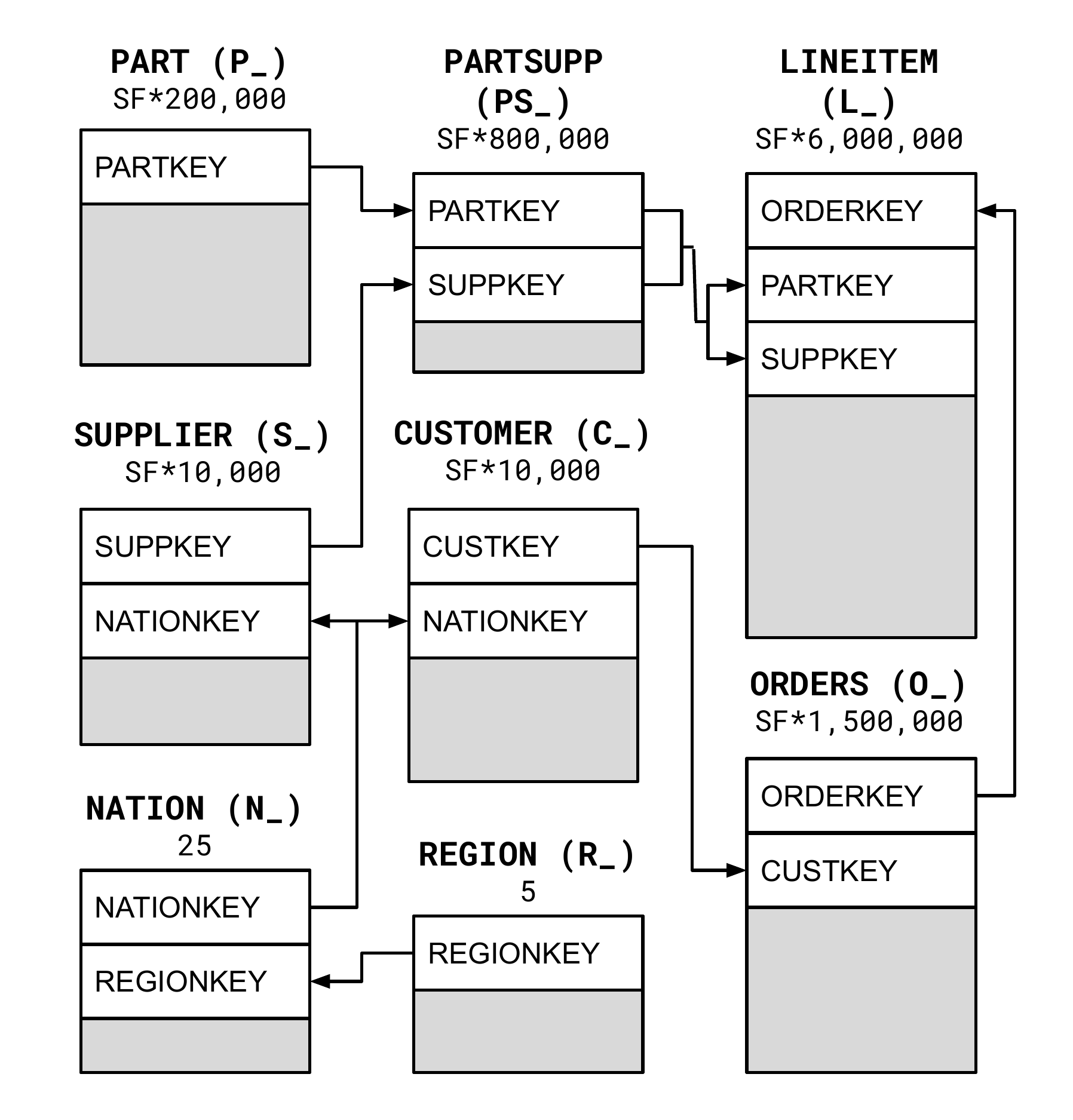}
	\caption{\shepherd{\textbf{PK-FK constraints in TPC-H schema. The cardinalities of the tables have been indicated at the top (SF denotes scale factor), and only primary key and foreign key attributes have been shown.}}}
	\label{fig:tpch_schema}
\end{figure}
	\end{appendix}
	
	\balance
\end{document}